%% file: main.tex
\documentclass[a4paper,fleqn]{cas-dc}

\usepackage{lmodern}

\pdftrailerid{}
\usepackage[numbers]{natbib}
\usepackage{amsmath,amssymb,amsthm}

\makeatletter
\@ifpackageloaded{stix}{%
  \let\mathrm\relax     \DeclareSymbolFontAlphabet{\mathrm}    {operators}%
  \let\mathnormal\relax \DeclareSymbolFontAlphabet{\mathnormal}{letters}%
  \let\mathcal\relax    \DeclareSymbolFontAlphabet{\mathcal}   {symbols}%
  \let\mathbb\relax     \DeclareSymbolFontAlphabet{\mathbb}    {AMSb}%
  \let\mathbf\relax     \DeclareMathAlphabet      {\mathbf}    {OT1}{lmr} {bx}{n}%
  \let\mathsf\relax     \DeclareMathAlphabet      {\mathsf}    {OT1}{lmss}{m}{n}%
  \let\mathit\relax     \DeclareMathAlphabet      {\mathit}    {OT1}{lmr} {m}{it}%
  \let\mathtt\relax     \DeclareMathAlphabet      {\mathtt}    {OT1}{lmtt}{m}{n}%
  \SetMathAlphabet{\mathbf}{bold}{OT1}{lmr} {bx}{n}%
  \SetMathAlphabet{\mathsf}{bold}{OT1}{lmss}{bx}{n}%
  \SetMathAlphabet{\mathit}{bold}{OT1}{lmr} {bx}{it}%
  \SetMathAlphabet{\mathtt}{bold}{OT1}{lmtt}{m}{n}%
  \input{xir-math-symbols}}{}
\makeatother

\usepackage{graphicx}
\usepackage{booktabs}
\usepackage{array}
\usepackage{float}
\usepackage{placeins}
\usepackage{xcolor}
\usepackage{listings}
\usepackage{algorithm}
\usepackage{algorithmicx}
\usepackage{algpseudocode}

\usepackage{tikz}
\usetikzlibrary{arrows.meta,positioning,fit,backgrounds,calc,bending}
\input{figures/xir-icons}
\usepackage{hyperref}
\usepackage{url}
\usepackage{orcidlink}
\hypersetup{
  hidelinks,
  pdftitle={XIR: A Framework for Interoperability across Cross-Chain Protocols Based on a Verifiable Intermediate Representation},
  pdfauthor={Yushen Li, Linpeng Jia, Jiaying Feng, Ziliang Liao, Yi Sun},
  pdfsubject={Cross-chain protocol interoperability and reachability},
  pdfkeywords={Blockchain, Cross-chain, Interoperability, Reachability}
}
\input{generated/paper-values}

\theoremstyle{definition}

\newtheorem{lemma}{Lemma}

\newtheorem{theorem}{Theorem}

\ExplSyntaxOn
\RenewDocumentCommand \emailauthor { m m }
  {
    \int_gincr:N \g_ead_int
    \seq_gput_right:Nn \g_stm_ead_seq
      {
        \mbox{
          \normalfont\mdseries
          \href{mailto:#1}{\textcolor[HTML]{007A99}{\tl_to_str:n {#1}}}
          \parsename {#2}
          \space(\eadauthor)
        }
      }
  }
\RenewDocumentCommand \printemails { }
  {
    \group_begin:
    \int_compare:nNnT {\g_ead_int} > {0}
      {
        \tex_let:D \thefootnote \relax
        \footnotetext{
          \raggedright\normalfont\mdseries
          \mbox{\textit{E-mail~address:}~\seq_use:Nn \g_stm_ead_seq {;~}.}
        }
      }
    \group_end:
  }
\ExplSyntaxOff

\begin{document}
\let\WriteBookmarks\relax
\def\floatpagepagefraction{1}
\def\textpagefraction{.001}

\shorttitle{XIR}
\shortauthors{Yushen Li et~al.}

\ExplSyntaxOn
\cs_set:Npn \__first_footerline:
  {
    \group_begin:
    \small
    \sffamily
    \__short_authors: :~
    { \rmfamily \itshape Preprint~submitted~to~Blockchain:~Research~and~Applications }
    \group_end:
  }
\cs_set:Npn \__cas_head:
  {
    \parbox{\textwidth}
      {
        \sffamily\small
        Y.~Li,~L.~Jia,~J.~Feng,~Z.~Liao~and~Y.~Sun
        \hfill
        Blockchain:~Research~and~Applications
      }
  }
\makeatletter\ps@cas\makeatother
\ExplSyntaxOff

\title[mode = title]{XIR: A Framework for Interoperability across Cross-Chain
Protocols Based on a Verifiable Intermediate Representation}

\author[1,2]{Yushen Li}[orcid=0009-0003-5069-0337,suffix={\orcidlink{0009-0003-5069-0337}}]
\author[1]{Linpeng Jia}[orcid=0000-0003-1916-6193,suffix={\orcidlink{0000-0003-1916-6193}}]
\cormark[1]
\ead{jialinpeng@ict.ac.cn}
\author[1,3]{Jiaying Feng}[orcid=0009-0009-3418-1552,suffix={\orcidlink{0009-0009-3418-1552}}]
\author[1,3]{Ziliang Liao}[orcid=0009-0008-0259-243X,suffix={\orcidlink{0009-0008-0259-243X}}]
\author[1,2,3,4,5]{Yi Sun}[orcid=0000-0002-9817-8004,suffix={\orcidlink{0000-0002-9817-8004}}]
\cortext[cor1]{Corresponding author.}

\affiliation[1]{
  organization={Institute of Computing Technology, Chinese Academy of Sciences},
  city={Beijing},
  postcode={100190},
  country={China}}
\affiliation[2]{
  organization={Hangzhou Institute for Advanced Study, University of Chinese Academy of Sciences},
  city={Hangzhou},
  postcode={310024},
  country={China}}
\affiliation[3]{
  organization={School of Computer Science and Technology, University of Chinese Academy of Sciences},
  city={Beijing},
  postcode={100049},
  country={China}}
\affiliation[4]{
  organization={Beijing Advanced Innovation Center for Future Blockchain and Privacy Computing, Beihang University},
  city={Beijing},
  postcode={100191},
  country={China}}
\affiliation[5]{
  organization={Shandong Key Laboratory of Blockchain Finance, Shandong University of Finance and Economics},
  city={Jinan},
  postcode={250014},
  country={China}}

\begin{abstract}
Cross-chain protocols enable applications to exchange messages across
blockchains. Under point-to-point configurations, communication depends on a
direct connection between the source and destination blockchains, limiting
blockchain reachability and requiring additional configurations to connect more
blockchains. To quantify this problem, this paper analyzes approximately 25 million
mainnet cross-chain transaction events collected from six protocols (Axelar, CCIP,
Hyperlane, LayerZero, Relay, and Wormhole) between January and October 2025\footnotemark[1]. The
resulting graph covers 286 active blockchains and 11,935 directly connected ordered
blockchain pairs. These connections provide a direct reachability of 14.64\%, while
full direct connectivity would require 81,510 point-to-point configurations.
We present XIR, a framework for interoperability across cross-chain protocols based
on a verifiable intermediate representation. This representation binds an application
message to an ordered record of authenticated cross-chain protocol deliveries,
preserving message identity and verification history across protocol boundaries.
XIR Gateways and XIR Adapters use this representation to compose
existing connections into same-protocol and cross-protocol multi-hop paths. We
implement an XIR prototype integrating Hyperlane and LayerZero and evaluate it in
local and public-testnet environments\footnotemark[2]. Theoretical analysis and evaluation show that, with correctly configured cross-chain protocol connections, XIR avoids 67,018 additional point-to-point configurations, equivalent to 84.88\% of the total required by a point-to-point configuration baseline serving the same reachable pairs, and increases reachability from 14.64\% to 96.86\% of all ordered
blockchain pairs.
\end{abstract}

\begin{keywords}
Blockchain \sep Cross-chain \sep Interoperability \sep Reachability
\end{keywords}

\maketitle
\footnotetext[1]{Cross-chain event dataset: \url{https://doi.org/10.34740/kaggle/dsv/19934883}.}
\footnotetext[2]{XIR implementation: \url{https://github.com/lysrain21/XIR}.}
\setcounter{footnote}{2}
\hypersetup{
  pdfauthor={Yushen Li, Linpeng Jia, Jiaying Feng, Ziliang Liao, Yi Sun},
  pdfsubject={Cross-chain protocol interoperability and reachability}
}

\input{sections/1-introduction}
\input{sections/2-background}
\input{sections/3-system-model}
\input{sections/4-xir-design}
\input{sections/5-analysis}
\input{sections/6-implementation-evaluation}
\FloatBarrier
\input{sections/7-discussion}
\input{sections/8-conclusion}

\input{sections/submission-declarations}

\setlength{\bibsep}{0pt}
\renewcommand{\bibfont}{\fontsize{8pt}{9pt}\selectfont}
\bibliographystyle{cas-model2-names}
\bibliography{refs}

\end{document}

%% file: xir-math-symbols.tex
%

\DeclareMathSymbol{a}{\mathalpha}{letters}{`a}
\DeclareMathDelimiter{(}{\mathopen}{operators}{"28}{largesymbols}{"00}
\DeclareMathDelimiter{[}{\mathopen}{operators}{"5B}{largesymbols}{"02}
\expandafter\let\csname lmoustache\endcsname\relax
\DeclareMathDelimiter{\lmoustache}{\mathopen}{largesymbols}{"7A}{largesymbols}{"40}
\expandafter\let\csname rmoustache\endcsname\relax
\DeclareMathDelimiter{\rmoustache}{\mathclose}{largesymbols}{"7B}{largesymbols}{"41}
\expandafter\let\csname arrowvert\endcsname\relax
\DeclareMathDelimiter{\arrowvert}{\mathord}{symbols}{"6A}{largesymbols}{"3C}
\expandafter\let\csname Arrowvert\endcsname\relax
\DeclareMathDelimiter{\Arrowvert}{\mathord}{symbols}{"6B}{largesymbols}{"3D}
\expandafter\let\csname Vert\endcsname\relax
\DeclareMathDelimiter{\Vert}{\mathord}{symbols}{"6B}{largesymbols}{"0D}
\DeclareMathDelimiter{\|}{\mathord}{symbols}{"6B}{largesymbols}{"0D}
\expandafter\let\csname vert\endcsname\relax
\DeclareMathDelimiter{\vert}{\mathord}{symbols}{"6A}{largesymbols}{"0C}
\expandafter\let\csname uparrow\endcsname\relax
\DeclareMathDelimiter{\uparrow}{\mathrel}{symbols}{"22}{largesymbols}{"78}
\expandafter\let\csname downarrow\endcsname\relax
\DeclareMathDelimiter{\downarrow}{\mathrel}{symbols}{"23}{largesymbols}{"79}
\expandafter\let\csname updownarrow\endcsname\relax
\DeclareMathDelimiter{\updownarrow}{\mathrel}{symbols}{"6C}{largesymbols}{"3F}
\expandafter\let\csname Uparrow\endcsname\relax
\DeclareMathDelimiter{\Uparrow}{\mathrel}{symbols}{"2A}{largesymbols}{"7E}
\expandafter\let\csname Downarrow\endcsname\relax
\DeclareMathDelimiter{\Downarrow}{\mathrel}{symbols}{"2B}{largesymbols}{"7F}
\expandafter\let\csname Updownarrow\endcsname\relax
\DeclareMathDelimiter{\Updownarrow}{\mathrel}{symbols}{"6D}{largesymbols}{"77}
\expandafter\let\csname backslash\endcsname\relax
\DeclareMathDelimiter{\backslash}{\mathord}{symbols}{"6E}{largesymbols}{"0F}
\expandafter\let\csname rangle\endcsname\relax
\DeclareMathDelimiter{\rangle}{\mathclose}{symbols}{"69}{largesymbols}{"0B}
\expandafter\let\csname langle\endcsname\relax
\DeclareMathDelimiter{\langle}{\mathopen}{symbols}{"68}{largesymbols}{"0A}
\expandafter\let\csname rbrace\endcsname\relax
\DeclareMathDelimiter{\rbrace}{\mathclose}{symbols}{"67}{largesymbols}{"09}
\expandafter\let\csname lbrace\endcsname\relax
\DeclareMathDelimiter{\lbrace}{\mathopen}{symbols}{"66}{largesymbols}{"08}
\expandafter\let\csname rceil\endcsname\relax
\DeclareMathDelimiter{\rceil}{\mathclose}{symbols}{"65}{largesymbols}{"07}
\expandafter\let\csname lceil\endcsname\relax
\DeclareMathDelimiter{\lceil}{\mathopen}{symbols}{"64}{largesymbols}{"06}
\expandafter\let\csname rfloor\endcsname\relax
\DeclareMathDelimiter{\rfloor}{\mathclose}{symbols}{"63}{largesymbols}{"05}
\expandafter\let\csname lfloor\endcsname\relax
\DeclareMathDelimiter{\lfloor}{\mathopen}{symbols}{"62}{largesymbols}{"04}
\expandafter\let\csname lgroup\endcsname\relax
\DeclareMathDelimiter{\lgroup}{\mathopen}{largesymbols}{"3A}{largesymbols}{"3A}
\expandafter\let\csname rgroup\endcsname\relax
\DeclareMathDelimiter{\rgroup}{\mathclose}{largesymbols}{"3B}{largesymbols}{"3B}
\expandafter\let\csname bracevert\endcsname\relax
\DeclareMathDelimiter{\bracevert}{\mathord}{largesymbols}{"3E}{largesymbols}{"3E}
\DeclareMathSymbol{a}{\mathalpha}{letters}{`a}
\DeclareMathSymbol{b}{\mathalpha}{letters}{`b}
\DeclareMathSymbol{c}{\mathalpha}{letters}{`c}
\DeclareMathSymbol{d}{\mathalpha}{letters}{`d}
\DeclareMathSymbol{e}{\mathalpha}{letters}{`e}
\DeclareMathSymbol{f}{\mathalpha}{letters}{`f}
\DeclareMathSymbol{g}{\mathalpha}{letters}{`g}
\DeclareMathSymbol{h}{\mathalpha}{letters}{`h}
\DeclareMathSymbol{i}{\mathalpha}{letters}{`i}
\DeclareMathSymbol{j}{\mathalpha}{letters}{`j}
\DeclareMathSymbol{k}{\mathalpha}{letters}{`k}
\DeclareMathSymbol{l}{\mathalpha}{letters}{`l}
\DeclareMathSymbol{m}{\mathalpha}{letters}{`m}
\DeclareMathSymbol{n}{\mathalpha}{letters}{`n}
\DeclareMathSymbol{o}{\mathalpha}{letters}{`o}
\DeclareMathSymbol{p}{\mathalpha}{letters}{`p}
\DeclareMathSymbol{q}{\mathalpha}{letters}{`q}
\DeclareMathSymbol{r}{\mathalpha}{letters}{`r}
\DeclareMathSymbol{s}{\mathalpha}{letters}{`s}
\DeclareMathSymbol{t}{\mathalpha}{letters}{`t}
\DeclareMathSymbol{u}{\mathalpha}{letters}{`u}
\DeclareMathSymbol{v}{\mathalpha}{letters}{`v}
\DeclareMathSymbol{w}{\mathalpha}{letters}{`w}
\DeclareMathSymbol{x}{\mathalpha}{letters}{`x}
\DeclareMathSymbol{y}{\mathalpha}{letters}{`y}
\DeclareMathSymbol{z}{\mathalpha}{letters}{`z}
\DeclareMathSymbol{A}{\mathalpha}{letters}{`A}
\DeclareMathSymbol{B}{\mathalpha}{letters}{`B}
\DeclareMathSymbol{C}{\mathalpha}{letters}{`C}
\DeclareMathSymbol{D}{\mathalpha}{letters}{`D}
\DeclareMathSymbol{E}{\mathalpha}{letters}{`E}
\DeclareMathSymbol{F}{\mathalpha}{letters}{`F}
\DeclareMathSymbol{G}{\mathalpha}{letters}{`G}
\DeclareMathSymbol{H}{\mathalpha}{letters}{`H}
\DeclareMathSymbol{I}{\mathalpha}{letters}{`I}
\DeclareMathSymbol{J}{\mathalpha}{letters}{`J}
\DeclareMathSymbol{K}{\mathalpha}{letters}{`K}
\DeclareMathSymbol{L}{\mathalpha}{letters}{`L}
\DeclareMathSymbol{M}{\mathalpha}{letters}{`M}
\DeclareMathSymbol{N}{\mathalpha}{letters}{`N}
\DeclareMathSymbol{O}{\mathalpha}{letters}{`O}
\DeclareMathSymbol{P}{\mathalpha}{letters}{`P}
\DeclareMathSymbol{Q}{\mathalpha}{letters}{`Q}
\DeclareMathSymbol{R}{\mathalpha}{letters}{`R}
\DeclareMathSymbol{S}{\mathalpha}{letters}{`S}
\DeclareMathSymbol{T}{\mathalpha}{letters}{`T}
\DeclareMathSymbol{U}{\mathalpha}{letters}{`U}
\DeclareMathSymbol{V}{\mathalpha}{letters}{`V}
\DeclareMathSymbol{W}{\mathalpha}{letters}{`W}
\DeclareMathSymbol{X}{\mathalpha}{letters}{`X}
\DeclareMathSymbol{Y}{\mathalpha}{letters}{`Y}
\DeclareMathSymbol{Z}{\mathalpha}{letters}{`Z}
\DeclareMathSymbol{0}{\mathalpha}{operators}{`0}
\DeclareMathSymbol{1}{\mathalpha}{operators}{`1}
\DeclareMathSymbol{2}{\mathalpha}{operators}{`2}
\DeclareMathSymbol{3}{\mathalpha}{operators}{`3}
\DeclareMathSymbol{4}{\mathalpha}{operators}{`4}
\DeclareMathSymbol{5}{\mathalpha}{operators}{`5}
\DeclareMathSymbol{6}{\mathalpha}{operators}{`6}
\DeclareMathSymbol{7}{\mathalpha}{operators}{`7}
\DeclareMathSymbol{8}{\mathalpha}{operators}{`8}
\DeclareMathSymbol{9}{\mathalpha}{operators}{`9}
\DeclareMathSymbol{!}{\mathclose}{operators}{"21}
\DeclareMathSymbol{*}{\mathbin}{symbols}{"03}
\DeclareMathSymbol{+}{\mathbin}{operators}{"2B}
\DeclareMathSymbol{,}{\mathpunct}{letters}{"3B}
\DeclareMathSymbol{-}{\mathbin}{symbols}{"00}
\DeclareMathSymbol{.}{\mathord}{letters}{"3A}
\DeclareMathSymbol{:}{\mathrel}{operators}{"3A}
\DeclareMathSymbol{;}{\mathpunct}{operators}{"3B}
\DeclareMathSymbol{=}{\mathrel}{operators}{"3D}
\DeclareMathSymbol{?}{\mathclose}{operators}{"3F}
\DeclareMathDelimiter{)}{\mathclose}{operators}{"29}{largesymbols}{"01}
\DeclareMathDelimiter{]}{\mathclose}{operators}{"5D}{largesymbols}{"03}
\DeclareMathDelimiter{<}{\mathopen}{symbols}{"68}{largesymbols}{"0A}
\DeclareMathDelimiter{>}{\mathclose}{symbols}{"69}{largesymbols}{"0B}
\DeclareMathSymbol{<}{\mathrel}{letters}{"3C}
\DeclareMathSymbol{>}{\mathrel}{letters}{"3E}
\DeclareMathDelimiter{/}{\mathord}{operators}{"2F}{largesymbols}{"0E}
\DeclareMathSymbol{/}{\mathord}{letters}{"3D}
\DeclareMathDelimiter{|}{\mathord}{symbols}{"6A}{largesymbols}{"0C}
\expandafter\let\csname alpha\endcsname\relax
\DeclareMathSymbol{\alpha}{\mathord}{letters}{"0B}
\expandafter\let\csname beta\endcsname\relax
\DeclareMathSymbol{\beta}{\mathord}{letters}{"0C}
\expandafter\let\csname gamma\endcsname\relax
\DeclareMathSymbol{\gamma}{\mathord}{letters}{"0D}
\expandafter\let\csname delta\endcsname\relax
\DeclareMathSymbol{\delta}{\mathord}{letters}{"0E}
\expandafter\let\csname epsilon\endcsname\relax
\DeclareMathSymbol{\epsilon}{\mathord}{letters}{"0F}
\expandafter\let\csname zeta\endcsname\relax
\DeclareMathSymbol{\zeta}{\mathord}{letters}{"10}
\expandafter\let\csname eta\endcsname\relax
\DeclareMathSymbol{\eta}{\mathord}{letters}{"11}
\expandafter\let\csname theta\endcsname\relax
\DeclareMathSymbol{\theta}{\mathord}{letters}{"12}
\expandafter\let\csname iota\endcsname\relax
\DeclareMathSymbol{\iota}{\mathord}{letters}{"13}
\expandafter\let\csname kappa\endcsname\relax
\DeclareMathSymbol{\kappa}{\mathord}{letters}{"14}
\expandafter\let\csname lambda\endcsname\relax
\DeclareMathSymbol{\lambda}{\mathord}{letters}{"15}
\expandafter\let\csname mu\endcsname\relax
\DeclareMathSymbol{\mu}{\mathord}{letters}{"16}
\expandafter\let\csname nu\endcsname\relax
\DeclareMathSymbol{\nu}{\mathord}{letters}{"17}
\expandafter\let\csname xi\endcsname\relax
\DeclareMathSymbol{\xi}{\mathord}{letters}{"18}
\expandafter\let\csname pi\endcsname\relax
\DeclareMathSymbol{\pi}{\mathord}{letters}{"19}
\expandafter\let\csname rho\endcsname\relax
\DeclareMathSymbol{\rho}{\mathord}{letters}{"1A}
\expandafter\let\csname sigma\endcsname\relax
\DeclareMathSymbol{\sigma}{\mathord}{letters}{"1B}
\expandafter\let\csname tau\endcsname\relax
\DeclareMathSymbol{\tau}{\mathord}{letters}{"1C}
\expandafter\let\csname upsilon\endcsname\relax
\DeclareMathSymbol{\upsilon}{\mathord}{letters}{"1D}
\expandafter\let\csname phi\endcsname\relax
\DeclareMathSymbol{\phi}{\mathord}{letters}{"1E}
\expandafter\let\csname chi\endcsname\relax
\DeclareMathSymbol{\chi}{\mathord}{letters}{"1F}
\expandafter\let\csname psi\endcsname\relax
\DeclareMathSymbol{\psi}{\mathord}{letters}{"20}
\expandafter\let\csname omega\endcsname\relax
\DeclareMathSymbol{\omega}{\mathord}{letters}{"21}
\expandafter\let\csname varepsilon\endcsname\relax
\DeclareMathSymbol{\varepsilon}{\mathord}{letters}{"22}
\expandafter\let\csname vartheta\endcsname\relax
\DeclareMathSymbol{\vartheta}{\mathord}{letters}{"23}
\expandafter\let\csname varpi\endcsname\relax
\DeclareMathSymbol{\varpi}{\mathord}{letters}{"24}
\expandafter\let\csname varrho\endcsname\relax
\DeclareMathSymbol{\varrho}{\mathord}{letters}{"25}
\expandafter\let\csname varsigma\endcsname\relax
\DeclareMathSymbol{\varsigma}{\mathord}{letters}{"26}
\expandafter\let\csname varphi\endcsname\relax
\DeclareMathSymbol{\varphi}{\mathord}{letters}{"27}
\expandafter\let\csname Gamma\endcsname\relax
\DeclareMathSymbol{\Gamma}{\mathalpha}{operators}{"00}
\expandafter\let\csname Delta\endcsname\relax
\DeclareMathSymbol{\Delta}{\mathalpha}{operators}{"01}
\expandafter\let\csname Theta\endcsname\relax
\DeclareMathSymbol{\Theta}{\mathalpha}{operators}{"02}
\expandafter\let\csname Lambda\endcsname\relax
\DeclareMathSymbol{\Lambda}{\mathalpha}{operators}{"03}
\expandafter\let\csname Xi\endcsname\relax
\DeclareMathSymbol{\Xi}{\mathalpha}{operators}{"04}
\expandafter\let\csname Pi\endcsname\relax
\DeclareMathSymbol{\Pi}{\mathalpha}{operators}{"05}
\expandafter\let\csname Sigma\endcsname\relax
\DeclareMathSymbol{\Sigma}{\mathalpha}{operators}{"06}
\expandafter\let\csname Upsilon\endcsname\relax
\DeclareMathSymbol{\Upsilon}{\mathalpha}{operators}{"07}
\expandafter\let\csname Phi\endcsname\relax
\DeclareMathSymbol{\Phi}{\mathalpha}{operators}{"08}
\expandafter\let\csname Psi\endcsname\relax
\DeclareMathSymbol{\Psi}{\mathalpha}{operators}{"09}
\expandafter\let\csname Omega\endcsname\relax
\DeclareMathSymbol{\Omega}{\mathalpha}{operators}{"0A}
\expandafter\let\csname aleph\endcsname\relax
\DeclareMathSymbol{\aleph}{\mathord}{symbols}{"40}
\expandafter\let\csname imath\endcsname\relax
\DeclareMathSymbol{\imath}{\mathord}{letters}{"7B}
\expandafter\let\csname jmath\endcsname\relax
\DeclareMathSymbol{\jmath}{\mathord}{letters}{"7C}
\expandafter\let\csname ell\endcsname\relax
\DeclareMathSymbol{\ell}{\mathord}{letters}{"60}
\expandafter\let\csname wp\endcsname\relax
\DeclareMathSymbol{\wp}{\mathord}{letters}{"7D}
\expandafter\let\csname Re\endcsname\relax
\DeclareMathSymbol{\Re}{\mathord}{symbols}{"3C}
\expandafter\let\csname Im\endcsname\relax
\DeclareMathSymbol{\Im}{\mathord}{symbols}{"3D}
\expandafter\let\csname partial\endcsname\relax
\DeclareMathSymbol{\partial}{\mathord}{letters}{"40}
\expandafter\let\csname infty\endcsname\relax
\DeclareMathSymbol{\infty}{\mathord}{symbols}{"31}
\expandafter\let\csname prime\endcsname\relax
\DeclareMathSymbol{\prime}{\mathord}{symbols}{"30}
\expandafter\let\csname emptyset\endcsname\relax
\DeclareMathSymbol{\emptyset}{\mathord}{symbols}{"3B}
\expandafter\let\csname nabla\endcsname\relax
\DeclareMathSymbol{\nabla}{\mathord}{symbols}{"72}
\expandafter\let\csname top\endcsname\relax
\DeclareMathSymbol{\top}{\mathord}{symbols}{"3E}
\expandafter\let\csname bot\endcsname\relax
\DeclareMathSymbol{\bot}{\mathord}{symbols}{"3F}
\expandafter\let\csname triangle\endcsname\relax
\DeclareMathSymbol{\triangle}{\mathord}{symbols}{"34}
\expandafter\let\csname forall\endcsname\relax
\DeclareMathSymbol{\forall}{\mathord}{symbols}{"38}
\expandafter\let\csname exists\endcsname\relax
\DeclareMathSymbol{\exists}{\mathord}{symbols}{"39}
\expandafter\let\csname neg\endcsname\relax
\DeclareMathSymbol{\neg}{\mathord}{symbols}{"3A}
\expandafter\let\csname lnot\endcsname\relax
\DeclareMathSymbol{\lnot}{\mathord}{symbols}{"3A}
\expandafter\let\csname flat\endcsname\relax
\DeclareMathSymbol{\flat}{\mathord}{letters}{"5B}
\expandafter\let\csname natural\endcsname\relax
\DeclareMathSymbol{\natural}{\mathord}{letters}{"5C}
\expandafter\let\csname sharp\endcsname\relax
\DeclareMathSymbol{\sharp}{\mathord}{letters}{"5D}
\expandafter\let\csname clubsuit\endcsname\relax
\DeclareMathSymbol{\clubsuit}{\mathord}{symbols}{"7C}
\expandafter\let\csname diamondsuit\endcsname\relax
\DeclareMathSymbol{\diamondsuit}{\mathord}{symbols}{"7D}
\expandafter\let\csname heartsuit\endcsname\relax
\DeclareMathSymbol{\heartsuit}{\mathord}{symbols}{"7E}
\expandafter\let\csname spadesuit\endcsname\relax
\DeclareMathSymbol{\spadesuit}{\mathord}{symbols}{"7F}
\expandafter\let\csname hbar\endcsname\relax
\DeclareRobustCommand\hbar{{\mathchar'26\mkern-9muh}}
\expandafter\let\csname surd\endcsname\relax

\expandafter\let\csname angle\endcsname\relax
\DeclareRobustCommand\angle{{\vbox{\ialign{$\m@th\scriptstyle##$\crcr
      \not\mathrel{\mkern14mu}\crcr
      \noalign{\nointerlineskip}
      \mkern2.5mu\leaders\hrule \@height.34pt\hfill\mkern2.5mu\crcr}}}}
\expandafter\let\csname coprod\endcsname\relax
\DeclareMathSymbol{\coprod}{\mathop}{largesymbols}{"60}
\expandafter\let\csname bigvee\endcsname\relax
\DeclareMathSymbol{\bigvee}{\mathop}{largesymbols}{"57}
\expandafter\let\csname bigwedge\endcsname\relax
\DeclareMathSymbol{\bigwedge}{\mathop}{largesymbols}{"56}
\expandafter\let\csname biguplus\endcsname\relax
\DeclareMathSymbol{\biguplus}{\mathop}{largesymbols}{"55}
\expandafter\let\csname bigcap\endcsname\relax
\DeclareMathSymbol{\bigcap}{\mathop}{largesymbols}{"54}
\expandafter\let\csname bigcup\endcsname\relax
\DeclareMathSymbol{\bigcup}{\mathop}{largesymbols}{"53}
\expandafter\let\csname intop\endcsname\relax
\DeclareMathSymbol{\intop}{\mathop}{largesymbols}{"52}
\expandafter\let\csname prod\endcsname\relax
\DeclareMathSymbol{\prod}{\mathop}{largesymbols}{"51}
\expandafter\let\csname sum\endcsname\relax
\DeclareMathSymbol{\sum}{\mathop}{largesymbols}{"50}
\expandafter\let\csname bigotimes\endcsname\relax
\DeclareMathSymbol{\bigotimes}{\mathop}{largesymbols}{"4E}
\expandafter\let\csname bigoplus\endcsname\relax
\DeclareMathSymbol{\bigoplus}{\mathop}{largesymbols}{"4C}
\expandafter\let\csname bigodot\endcsname\relax
\DeclareMathSymbol{\bigodot}{\mathop}{largesymbols}{"4A}
\expandafter\let\csname ointop\endcsname\relax
\DeclareMathSymbol{\ointop}{\mathop}{largesymbols}{"48}
\expandafter\let\csname bigsqcup\endcsname\relax
\DeclareMathSymbol{\bigsqcup}{\mathop}{largesymbols}{"46}
\expandafter\let\csname smallint\endcsname\relax
\DeclareMathSymbol{\smallint}{\mathop}{symbols}{"73}
\expandafter\let\csname triangleleft\endcsname\relax
\DeclareMathSymbol{\triangleleft}{\mathbin}{letters}{"2F}
\expandafter\let\csname triangleright\endcsname\relax
\DeclareMathSymbol{\triangleright}{\mathbin}{letters}{"2E}
\expandafter\let\csname bigtriangleup\endcsname\relax
\DeclareMathSymbol{\bigtriangleup}{\mathbin}{symbols}{"34}
\expandafter\let\csname bigtriangledown\endcsname\relax
\DeclareMathSymbol{\bigtriangledown}{\mathbin}{symbols}{"35}
\expandafter\let\csname varbigtriangleup\endcsname\relax
\DeclareMathSymbol{\varbigtriangleup}{\mathbin}{symbols}{"34}
\expandafter\let\csname varbigtriangledown\endcsname\relax
\DeclareMathSymbol{\varbigtriangledown}{\mathbin}{symbols}{"35}
\expandafter\let\csname wedge\endcsname\relax
\DeclareMathSymbol{\wedge}{\mathbin}{symbols}{"5E}
\expandafter\let\csname vee\endcsname\relax
\DeclareMathSymbol{\vee}{\mathbin}{symbols}{"5F}
\expandafter\let\csname land\endcsname\relax
\DeclareMathSymbol{\land}{\mathbin}{symbols}{"5E}
\expandafter\let\csname lor\endcsname\relax
\DeclareMathSymbol{\lor}{\mathbin}{symbols}{"5F}
\expandafter\let\csname cap\endcsname\relax
\DeclareMathSymbol{\cap}{\mathbin}{symbols}{"5C}
\expandafter\let\csname cup\endcsname\relax
\DeclareMathSymbol{\cup}{\mathbin}{symbols}{"5B}
\expandafter\let\csname ddagger\endcsname\relax
\DeclareMathSymbol{\ddagger}{\mathbin}{symbols}{"7A}
\expandafter\let\csname dagger\endcsname\relax
\DeclareMathSymbol{\dagger}{\mathbin}{symbols}{"79}
\expandafter\let\csname sqcap\endcsname\relax
\DeclareMathSymbol{\sqcap}{\mathbin}{symbols}{"75}
\expandafter\let\csname sqcup\endcsname\relax
\DeclareMathSymbol{\sqcup}{\mathbin}{symbols}{"74}
\expandafter\let\csname uplus\endcsname\relax
\DeclareMathSymbol{\uplus}{\mathbin}{symbols}{"5D}
\expandafter\let\csname amalg\endcsname\relax
\DeclareMathSymbol{\amalg}{\mathbin}{symbols}{"71}
\expandafter\let\csname diamond\endcsname\relax
\DeclareMathSymbol{\diamond}{\mathbin}{symbols}{"05}
\expandafter\let\csname bullet\endcsname\relax
\DeclareMathSymbol{\bullet}{\mathbin}{symbols}{"0F}
\expandafter\let\csname wr\endcsname\relax
\DeclareMathSymbol{\wr}{\mathbin}{symbols}{"6F}
\expandafter\let\csname div\endcsname\relax
\DeclareMathSymbol{\div}{\mathbin}{symbols}{"04}
\expandafter\let\csname odot\endcsname\relax
\DeclareMathSymbol{\odot}{\mathbin}{symbols}{"0C}
\expandafter\let\csname oslash\endcsname\relax
\DeclareMathSymbol{\oslash}{\mathbin}{symbols}{"0B}
\expandafter\let\csname otimes\endcsname\relax
\DeclareMathSymbol{\otimes}{\mathbin}{symbols}{"0A}
\expandafter\let\csname ominus\endcsname\relax
\DeclareMathSymbol{\ominus}{\mathbin}{symbols}{"09}
\expandafter\let\csname oplus\endcsname\relax
\DeclareMathSymbol{\oplus}{\mathbin}{symbols}{"08}
\expandafter\let\csname mp\endcsname\relax
\DeclareMathSymbol{\mp}{\mathbin}{symbols}{"07}
\expandafter\let\csname pm\endcsname\relax
\DeclareMathSymbol{\pm}{\mathbin}{symbols}{"06}
\expandafter\let\csname circ\endcsname\relax
\DeclareMathSymbol{\circ}{\mathbin}{symbols}{"0E}
\expandafter\let\csname bigcirc\endcsname\relax
\DeclareMathSymbol{\bigcirc}{\mathbin}{symbols}{"0D}
\expandafter\let\csname setminus\endcsname\relax
\DeclareMathSymbol{\setminus}{\mathbin}{symbols}{"6E}
\expandafter\let\csname cdot\endcsname\relax
\DeclareMathSymbol{\cdot}{\mathbin}{symbols}{"01}
\expandafter\let\csname ast\endcsname\relax
\DeclareMathSymbol{\ast}{\mathbin}{symbols}{"03}
\expandafter\let\csname times\endcsname\relax
\DeclareMathSymbol{\times}{\mathbin}{symbols}{"02}
\expandafter\let\csname star\endcsname\relax
\DeclareMathSymbol{\star}{\mathbin}{letters}{"3F}
\expandafter\let\csname propto\endcsname\relax
\DeclareMathSymbol{\propto}{\mathrel}{symbols}{"2F}
\expandafter\let\csname sqsubseteq\endcsname\relax
\DeclareMathSymbol{\sqsubseteq}{\mathrel}{symbols}{"76}
\expandafter\let\csname sqsupseteq\endcsname\relax
\DeclareMathSymbol{\sqsupseteq}{\mathrel}{symbols}{"77}
\expandafter\let\csname parallel\endcsname\relax
\DeclareMathSymbol{\parallel}{\mathrel}{symbols}{"6B}
\expandafter\let\csname mid\endcsname\relax
\DeclareMathSymbol{\mid}{\mathrel}{symbols}{"6A}
\expandafter\let\csname dashv\endcsname\relax
\DeclareMathSymbol{\dashv}{\mathrel}{symbols}{"61}
\expandafter\let\csname vdash\endcsname\relax
\DeclareMathSymbol{\vdash}{\mathrel}{symbols}{"60}
\expandafter\let\csname nearrow\endcsname\relax
\DeclareMathSymbol{\nearrow}{\mathrel}{symbols}{"25}
\expandafter\let\csname searrow\endcsname\relax
\DeclareMathSymbol{\searrow}{\mathrel}{symbols}{"26}
\expandafter\let\csname nwarrow\endcsname\relax
\DeclareMathSymbol{\nwarrow}{\mathrel}{symbols}{"2D}
\expandafter\let\csname swarrow\endcsname\relax
\DeclareMathSymbol{\swarrow}{\mathrel}{symbols}{"2E}
\expandafter\let\csname Leftrightarrow\endcsname\relax
\DeclareMathSymbol{\Leftrightarrow}{\mathrel}{symbols}{"2C}
\expandafter\let\csname Leftarrow\endcsname\relax
\DeclareMathSymbol{\Leftarrow}{\mathrel}{symbols}{"28}
\expandafter\let\csname Rightarrow\endcsname\relax
\DeclareMathSymbol{\Rightarrow}{\mathrel}{symbols}{"29}
\expandafter\let\csname neq\endcsname\relax

\expandafter\let\csname ne\endcsname\relax
\DeclareRobustCommand\ne{\not=}
\expandafter\let\csname leq\endcsname\relax
\DeclareMathSymbol{\leq}{\mathrel}{symbols}{"14}
\expandafter\let\csname geq\endcsname\relax
\DeclareMathSymbol{\geq}{\mathrel}{symbols}{"15}
\expandafter\let\csname le\endcsname\relax
\DeclareMathSymbol{\le}{\mathrel}{symbols}{"14}
\expandafter\let\csname ge\endcsname\relax
\DeclareMathSymbol{\ge}{\mathrel}{symbols}{"15}
\expandafter\let\csname succ\endcsname\relax
\DeclareMathSymbol{\succ}{\mathrel}{symbols}{"1F}
\expandafter\let\csname prec\endcsname\relax
\DeclareMathSymbol{\prec}{\mathrel}{symbols}{"1E}
\expandafter\let\csname approx\endcsname\relax
\DeclareMathSymbol{\approx}{\mathrel}{symbols}{"19}
\expandafter\let\csname succeq\endcsname\relax
\DeclareMathSymbol{\succeq}{\mathrel}{symbols}{"17}
\expandafter\let\csname preceq\endcsname\relax
\DeclareMathSymbol{\preceq}{\mathrel}{symbols}{"16}
\expandafter\let\csname supset\endcsname\relax
\DeclareMathSymbol{\supset}{\mathrel}{symbols}{"1B}
\expandafter\let\csname subset\endcsname\relax
\DeclareMathSymbol{\subset}{\mathrel}{symbols}{"1A}
\expandafter\let\csname supseteq\endcsname\relax
\DeclareMathSymbol{\supseteq}{\mathrel}{symbols}{"13}
\expandafter\let\csname subseteq\endcsname\relax
\DeclareMathSymbol{\subseteq}{\mathrel}{symbols}{"12}
\expandafter\let\csname in\endcsname\relax
\DeclareMathSymbol{\in}{\mathrel}{symbols}{"32}
\expandafter\let\csname ni\endcsname\relax
\DeclareMathSymbol{\ni}{\mathrel}{symbols}{"33}
\expandafter\let\csname owns\endcsname\relax
\DeclareMathSymbol{\owns}{\mathrel}{symbols}{"33}
\expandafter\let\csname gg\endcsname\relax
\DeclareMathSymbol{\gg}{\mathrel}{symbols}{"1D}
\expandafter\let\csname ll\endcsname\relax
\DeclareMathSymbol{\ll}{\mathrel}{symbols}{"1C}
\expandafter\let\csname not\endcsname\relax
\DeclareMathSymbol{\not}{\mathrel}{symbols}{"36}
\expandafter\let\csname leftrightarrow\endcsname\relax
\DeclareMathSymbol{\leftrightarrow}{\mathrel}{symbols}{"24}
\expandafter\let\csname leftarrow\endcsname\relax
\DeclareMathSymbol{\leftarrow}{\mathrel}{symbols}{"20}
\expandafter\let\csname rightarrow\endcsname\relax
\DeclareMathSymbol{\rightarrow}{\mathrel}{symbols}{"21}
\expandafter\let\csname gets\endcsname\relax
\DeclareMathSymbol{\gets}{\mathrel}{symbols}{"20}
\expandafter\let\csname to\endcsname\relax
\DeclareMathSymbol{\to}{\mathrel}{symbols}{"21}
\expandafter\let\csname mapstochar\endcsname\relax
\DeclareMathSymbol{\mapstochar}{\mathrel}{symbols}{"37}
\expandafter\let\csname sim\endcsname\relax
\DeclareMathSymbol{\sim}{\mathrel}{symbols}{"18}
\expandafter\let\csname simeq\endcsname\relax
\DeclareMathSymbol{\simeq}{\mathrel}{symbols}{"27}
\expandafter\let\csname perp\endcsname\relax
\DeclareMathSymbol{\perp}{\mathrel}{symbols}{"3F}
\expandafter\let\csname equiv\endcsname\relax
\DeclareMathSymbol{\equiv}{\mathrel}{symbols}{"11}
\expandafter\let\csname asymp\endcsname\relax
\DeclareMathSymbol{\asymp}{\mathrel}{symbols}{"10}
\expandafter\let\csname smile\endcsname\relax
\DeclareMathSymbol{\smile}{\mathrel}{letters}{"5E}
\expandafter\let\csname frown\endcsname\relax
\DeclareMathSymbol{\frown}{\mathrel}{letters}{"5F}
\expandafter\let\csname leftharpoonup\endcsname\relax
\DeclareMathSymbol{\leftharpoonup}{\mathrel}{letters}{"28}
\expandafter\let\csname leftharpoondown\endcsname\relax
\DeclareMathSymbol{\leftharpoondown}{\mathrel}{letters}{"29}
\expandafter\let\csname rightharpoonup\endcsname\relax
\DeclareMathSymbol{\rightharpoonup}{\mathrel}{letters}{"2A}
\expandafter\let\csname rightharpoondown\endcsname\relax
\DeclareMathSymbol{\rightharpoondown}{\mathrel}{letters}{"2B}
\expandafter\let\csname cong\endcsname\relax
\DeclareRobustCommand
  \cong{\mathrel{\mathpalette\@vereq\sim}}
\expandafter\let\csname notin\endcsname\relax
\DeclareRobustCommand
  \notin{\mathrel{\m@th\mathpalette\c@ncel\in}}
\expandafter\let\csname rightleftharpoons\endcsname\relax
\DeclareRobustCommand
  \rightleftharpoons{\mathrel{\mathpalette\rlh@{}}}
\expandafter\let\csname joinrel\endcsname\relax

\expandafter\let\csname relbar\endcsname\relax

\expandafter\let\csname Relbar\endcsname\relax

\expandafter\let\csname lhook\endcsname\relax
\DeclareMathSymbol{\lhook}{\mathrel}{letters}{"2C}
\expandafter\let\csname rhook\endcsname\relax
\DeclareMathSymbol{\rhook}{\mathrel}{letters}{"2D}
\expandafter\let\csname bowtie\endcsname\relax

\expandafter\let\csname models\endcsname\relax

\expandafter\let\csname ldotp\endcsname\relax
\DeclareMathSymbol{\ldotp}{\mathpunct}{letters}{"3A}
\expandafter\let\csname cdotp\endcsname\relax
\DeclareMathSymbol{\cdotp}{\mathpunct}{symbols}{"01}
\expandafter\let\csname colon\endcsname\relax
\DeclareMathSymbol{\colon}{\mathpunct}{operators}{"3A}
\expandafter\let\csname cdots\endcsname\relax
\DeclareRobustCommand
  \cdots{\mathinner{\cdotp\cdotp\cdotp}}
\expandafter\let\csname vdots\endcsname\relax
\DeclareRobustCommand
  \vdots{\vbox{\baselineskip4\p@ \lineskiplimit\z@
    \kern6\p@\hbox{.}\hbox{.}\hbox{.}}}
\expandafter\let\csname ddots\endcsname\relax
\DeclareRobustCommand
  \ddots{\mathinner{\mkern1mu\raise7\p@
    \vbox{\kern7\p@\hbox{.}}\mkern2mu
    \raise4\p@\hbox{.}\mkern2mu\raise\p@\hbox{.}\mkern1mu}}
\expandafter\let\csname acute\endcsname\relax
\DeclareMathAccent{\acute}{\mathalpha}{operators}{"13}
\expandafter\let\csname grave\endcsname\relax
\DeclareMathAccent{\grave}{\mathalpha}{operators}{"12}
\expandafter\let\csname ddot\endcsname\relax
\DeclareMathAccent{\ddot}{\mathalpha}{operators}{"7F}
\expandafter\let\csname tilde\endcsname\relax
\DeclareMathAccent{\tilde}{\mathalpha}{operators}{"7E}
\expandafter\let\csname bar\endcsname\relax
\DeclareMathAccent{\bar}{\mathalpha}{operators}{"16}
\expandafter\let\csname breve\endcsname\relax
\DeclareMathAccent{\breve}{\mathalpha}{operators}{"15}
\expandafter\let\csname check\endcsname\relax
\DeclareMathAccent{\check}{\mathalpha}{operators}{"14}
\expandafter\let\csname hat\endcsname\relax
\DeclareMathAccent{\hat}{\mathalpha}{operators}{"5E}
\expandafter\let\csname vec\endcsname\relax
\DeclareMathAccent{\vec}{\mathord}{letters}{"7E}
\expandafter\let\csname dot\endcsname\relax
\DeclareMathAccent{\dot}{\mathalpha}{operators}{"5F}
\expandafter\let\csname widetilde\endcsname\relax
\DeclareMathAccent{\widetilde}{\mathord}{largesymbols}{"65}
\expandafter\let\csname widehat\endcsname\relax
\DeclareMathAccent{\widehat}{\mathord}{largesymbols}{"62}
\expandafter\let\csname mathring\endcsname\relax
\DeclareMathAccent{\mathring}{\mathalpha}{operators}{"17}
\expandafter\let\csname sqrtsign\endcsname\relax
\DeclareMathRadical{\sqrtsign}{symbols}{"70}{largesymbols}{"70}
\expandafter\let\csname rightarrowfill\endcsname\relax
\DeclareRobustCommand\rightarrowfill{$\m@th\smash-\mkern-7mu
  \cleaders\hbox{$\mkern-2mu\smash-\mkern-2mu$}\hfill
  \mkern-7mu\mathord\rightarrow$}
\expandafter\let\csname leftarrowfill\endcsname\relax
\DeclareRobustCommand\leftarrowfill{$\m@th\mathord\leftarrow\mkern-7mu
  \cleaders\hbox{$\mkern-2mu\smash-\mkern-2mu$}\hfill
  \mkern-7mu\smash-$}
\expandafter\let\csname braceld\endcsname\relax
\DeclareMathSymbol{\braceld}{\mathord}{largesymbols}{"7A}
\expandafter\let\csname bracerd\endcsname\relax
\DeclareMathSymbol{\bracerd}{\mathord}{largesymbols}{"7B}
\expandafter\let\csname bracelu\endcsname\relax
\DeclareMathSymbol{\bracelu}{\mathord}{largesymbols}{"7C}
\expandafter\let\csname braceru\endcsname\relax
\DeclareMathSymbol{\braceru}{\mathord}{largesymbols}{"7D}
\expandafter\let\csname downbracefill\endcsname\relax
\DeclareRobustCommand\downbracefill{$\m@th \setbox\z@\hbox{$\braceld$}
  \braceld\leaders\vrule \@height\ht\z@ \@depth\z@\hfill\braceru
  \bracelu\leaders\vrule \@height\ht\z@ \@depth\z@\hfill\bracerd$}
\expandafter\let\csname upbracefill\endcsname\relax
\DeclareRobustCommand\upbracefill{$\m@th \setbox\z@\hbox{$\braceld$}
  \bracelu\leaders\vrule \@height\ht\z@ \@depth\z@\hfill\bracerd
  \braceld\leaders\vrule \@height\ht\z@ \@depth\z@\hfill\braceru$}
\expandafter\let\csname mathparagraph\endcsname\relax
\DeclareMathSymbol{\mathparagraph}{\mathord}{symbols}{"7B}
\expandafter\let\csname mathsection\endcsname\relax
\DeclareMathSymbol{\mathsection}{\mathord}{symbols}{"78}
\expandafter\let\csname mathdollar\endcsname\relax
\DeclareMathSymbol{\mathdollar}{\mathord}{operators}{"24}
\expandafter\let\csname mathsterling\endcsname\relax

\expandafter\let\csname mathunderscore\endcsname\relax
\DeclareRobustCommand\mathunderscore{\kern.06em\vbox{\hrule\@width.3em}}
\expandafter\let\csname mathellipsis\endcsname\relax

\expandafter\let\csname square\endcsname\relax
\DeclareMathSymbol{\square}{\mathord}{AMSa}{"03}
\expandafter\let\csname lozenge\endcsname\relax
\DeclareMathSymbol{\lozenge}{\mathord}{AMSa}{"06}
\expandafter\let\csname lhd\endcsname\relax
\DeclareMathSymbol{\lhd}{\mathbin}{AMSa}{"43}
\expandafter\let\csname unlhd\endcsname\relax
\DeclareMathSymbol{\unlhd}{\mathbin}{AMSa}{"45}
\expandafter\let\csname rhd\endcsname\relax
\DeclareMathSymbol{\rhd}{\mathbin}{AMSa}{"42}
\expandafter\let\csname unrhd\endcsname\relax
\DeclareMathSymbol{\unrhd}{\mathbin}{AMSa}{"44}
\expandafter\let\csname dabar@\endcsname\relax
\DeclareMathSymbol{\dabar@}{\mathord}{AMSa}{"39}
\expandafter\let\csname boxdot\endcsname\relax
\DeclareMathSymbol{\boxdot}{\mathbin}{AMSa}{"00}
\expandafter\let\csname boxplus\endcsname\relax
\DeclareMathSymbol{\boxplus}{\mathbin}{AMSa}{"01}
\expandafter\let\csname boxtimes\endcsname\relax
\DeclareMathSymbol{\boxtimes}{\mathbin}{AMSa}{"02}
\expandafter\let\csname square\endcsname\relax
\DeclareMathSymbol{\square}{\mathord}{AMSa}{"03}
\expandafter\let\csname blacksquare\endcsname\relax
\DeclareMathSymbol{\blacksquare}{\mathord}{AMSa}{"04}
\expandafter\let\csname centerdot\endcsname\relax
\DeclareMathSymbol{\centerdot}{\mathbin}{AMSa}{"05}
\expandafter\let\csname lozenge\endcsname\relax
\DeclareMathSymbol{\lozenge}{\mathord}{AMSa}{"06}
\expandafter\let\csname blacklozenge\endcsname\relax
\DeclareMathSymbol{\blacklozenge}{\mathord}{AMSa}{"07}
\expandafter\let\csname circlearrowright\endcsname\relax
\DeclareMathSymbol{\circlearrowright}{\mathrel}{AMSa}{"08}
\expandafter\let\csname circlearrowleft\endcsname\relax
\DeclareMathSymbol{\circlearrowleft}{\mathrel}{AMSa}{"09}
\expandafter\let\csname leftrightharpoons\endcsname\relax
\DeclareMathSymbol{\leftrightharpoons}{\mathrel}{AMSa}{"0B}
\expandafter\let\csname boxminus\endcsname\relax
\DeclareMathSymbol{\boxminus}{\mathbin}{AMSa}{"0C}
\expandafter\let\csname Vdash\endcsname\relax
\DeclareMathSymbol{\Vdash}{\mathrel}{AMSa}{"0D}
\expandafter\let\csname Vvdash\endcsname\relax
\DeclareMathSymbol{\Vvdash}{\mathrel}{AMSa}{"0E}
\expandafter\let\csname vDash\endcsname\relax
\DeclareMathSymbol{\vDash}{\mathrel}{AMSa}{"0F}
\expandafter\let\csname twoheadrightarrow\endcsname\relax
\DeclareMathSymbol{\twoheadrightarrow}{\mathrel}{AMSa}{"10}
\expandafter\let\csname twoheadleftarrow\endcsname\relax
\DeclareMathSymbol{\twoheadleftarrow}{\mathrel}{AMSa}{"11}
\expandafter\let\csname leftleftarrows\endcsname\relax
\DeclareMathSymbol{\leftleftarrows}{\mathrel}{AMSa}{"12}
\expandafter\let\csname rightrightarrows\endcsname\relax
\DeclareMathSymbol{\rightrightarrows}{\mathrel}{AMSa}{"13}
\expandafter\let\csname upuparrows\endcsname\relax
\DeclareMathSymbol{\upuparrows}{\mathrel}{AMSa}{"14}
\expandafter\let\csname downdownarrows\endcsname\relax
\DeclareMathSymbol{\downdownarrows}{\mathrel}{AMSa}{"15}
\expandafter\let\csname upharpoonright\endcsname\relax
\DeclareMathSymbol{\upharpoonright}{\mathrel}{AMSa}{"16}
\expandafter\let\csname downharpoonright\endcsname\relax
\DeclareMathSymbol{\downharpoonright}{\mathrel}{AMSa}{"17}
\expandafter\let\csname upharpoonleft\endcsname\relax
\DeclareMathSymbol{\upharpoonleft}{\mathrel}{AMSa}{"18}
\expandafter\let\csname rightarrowtail\endcsname\relax
\DeclareMathSymbol{\rightarrowtail}{\mathrel}{AMSa}{"1A}
\expandafter\let\csname leftarrowtail\endcsname\relax
\DeclareMathSymbol{\leftarrowtail}{\mathrel}{AMSa}{"1B}
\expandafter\let\csname Lsh\endcsname\relax
\DeclareMathSymbol{\Lsh}{\mathrel}{AMSa}{"1E}
\expandafter\let\csname Rsh\endcsname\relax
\DeclareMathSymbol{\Rsh}{\mathrel}{AMSa}{"1F}
\expandafter\let\csname rightsquigarrow\endcsname\relax
\DeclareMathSymbol{\rightsquigarrow}{\mathrel}{AMSa}{"20}
\expandafter\let\csname looparrowleft\endcsname\relax
\DeclareMathSymbol{\looparrowleft}{\mathrel}{AMSa}{"22}
\expandafter\let\csname looparrowright\endcsname\relax
\DeclareMathSymbol{\looparrowright}{\mathrel}{AMSa}{"23}
\expandafter\let\csname circeq\endcsname\relax
\DeclareMathSymbol{\circeq}{\mathrel}{AMSa}{"24}
\expandafter\let\csname succsim\endcsname\relax
\DeclareMathSymbol{\succsim}{\mathrel}{AMSa}{"25}
\expandafter\let\csname gtrsim\endcsname\relax
\DeclareMathSymbol{\gtrsim}{\mathrel}{AMSa}{"26}
\expandafter\let\csname gtrapprox\endcsname\relax
\DeclareMathSymbol{\gtrapprox}{\mathrel}{AMSa}{"27}
\expandafter\let\csname multimap\endcsname\relax
\DeclareMathSymbol{\multimap}{\mathrel}{AMSa}{"28}
\expandafter\let\csname therefore\endcsname\relax
\DeclareMathSymbol{\therefore}{\mathrel}{AMSa}{"29}
\expandafter\let\csname because\endcsname\relax
\DeclareMathSymbol{\because}{\mathrel}{AMSa}{"2A}
\expandafter\let\csname doteqdot\endcsname\relax
\DeclareMathSymbol{\doteqdot}{\mathrel}{AMSa}{"2B}
\expandafter\let\csname triangleq\endcsname\relax
\DeclareMathSymbol{\triangleq}{\mathrel}{AMSa}{"2C}
\expandafter\let\csname precsim\endcsname\relax
\DeclareMathSymbol{\precsim}{\mathrel}{AMSa}{"2D}
\expandafter\let\csname lesssim\endcsname\relax
\DeclareMathSymbol{\lesssim}{\mathrel}{AMSa}{"2E}
\expandafter\let\csname lessapprox\endcsname\relax
\DeclareMathSymbol{\lessapprox}{\mathrel}{AMSa}{"2F}
\expandafter\let\csname eqslantless\endcsname\relax
\DeclareMathSymbol{\eqslantless}{\mathrel}{AMSa}{"30}
\expandafter\let\csname eqslantgtr\endcsname\relax
\DeclareMathSymbol{\eqslantgtr}{\mathrel}{AMSa}{"31}
\expandafter\let\csname curlyeqprec\endcsname\relax
\DeclareMathSymbol{\curlyeqprec}{\mathrel}{AMSa}{"32}
\expandafter\let\csname curlyeqsucc\endcsname\relax
\DeclareMathSymbol{\curlyeqsucc}{\mathrel}{AMSa}{"33}
\expandafter\let\csname preccurlyeq\endcsname\relax
\DeclareMathSymbol{\preccurlyeq}{\mathrel}{AMSa}{"34}
\expandafter\let\csname leqq\endcsname\relax
\DeclareMathSymbol{\leqq}{\mathrel}{AMSa}{"35}
\expandafter\let\csname leqslant\endcsname\relax
\DeclareMathSymbol{\leqslant}{\mathrel}{AMSa}{"36}
\expandafter\let\csname lessgtr\endcsname\relax
\DeclareMathSymbol{\lessgtr}{\mathrel}{AMSa}{"37}
\expandafter\let\csname backprime\endcsname\relax
\DeclareMathSymbol{\backprime}{\mathord}{AMSa}{"38}
\expandafter\let\csname risingdotseq\endcsname\relax
\DeclareMathSymbol{\risingdotseq}{\mathrel}{AMSa}{"3A}
\expandafter\let\csname succcurlyeq\endcsname\relax
\DeclareMathSymbol{\succcurlyeq}{\mathrel}{AMSa}{"3C}
\expandafter\let\csname geqq\endcsname\relax
\DeclareMathSymbol{\geqq}{\mathrel}{AMSa}{"3D}
\expandafter\let\csname geqslant\endcsname\relax
\DeclareMathSymbol{\geqslant}{\mathrel}{AMSa}{"3E}
\expandafter\let\csname gtrless\endcsname\relax
\DeclareMathSymbol{\gtrless}{\mathrel}{AMSa}{"3F}
\expandafter\let\csname vartriangleleft\endcsname\relax
\DeclareMathSymbol{\vartriangleleft}{\mathrel}{AMSa}{"43}
\expandafter\let\csname trianglerighteq\endcsname\relax
\DeclareMathSymbol{\trianglerighteq}{\mathrel}{AMSa}{"44}
\expandafter\let\csname trianglelefteq\endcsname\relax
\DeclareMathSymbol{\trianglelefteq}{\mathrel}{AMSa}{"45}
\expandafter\let\csname bigstar\endcsname\relax
\DeclareMathSymbol{\bigstar}{\mathord}{AMSa}{"46}
\expandafter\let\csname between\endcsname\relax
\DeclareMathSymbol{\between}{\mathrel}{AMSa}{"47}
\expandafter\let\csname blacktriangledown\endcsname\relax
\DeclareMathSymbol{\blacktriangledown}{\mathord}{AMSa}{"48}
\expandafter\let\csname blacktriangleright\endcsname\relax
\DeclareMathSymbol{\blacktriangleright}{\mathrel}{AMSa}{"49}
\expandafter\let\csname blacktriangleleft\endcsname\relax
\DeclareMathSymbol{\blacktriangleleft}{\mathrel}{AMSa}{"4A}
\expandafter\let\csname vartriangle\endcsname\relax
\DeclareMathSymbol{\vartriangle}{\mathrel}{AMSa}{"4D}
\expandafter\let\csname blacktriangle\endcsname\relax
\DeclareMathSymbol{\blacktriangle}{\mathord}{AMSa}{"4E}
\expandafter\let\csname triangledown\endcsname\relax
\DeclareMathSymbol{\triangledown}{\mathord}{AMSa}{"4F}
\expandafter\let\csname eqcirc\endcsname\relax
\DeclareMathSymbol{\eqcirc}{\mathrel}{AMSa}{"50}
\expandafter\let\csname lesseqgtr\endcsname\relax
\DeclareMathSymbol{\lesseqgtr}{\mathrel}{AMSa}{"51}
\expandafter\let\csname gtreqless\endcsname\relax
\DeclareMathSymbol{\gtreqless}{\mathrel}{AMSa}{"52}
\expandafter\let\csname lesseqqgtr\endcsname\relax
\DeclareMathSymbol{\lesseqqgtr}{\mathrel}{AMSa}{"53}
\expandafter\let\csname gtreqqless\endcsname\relax
\DeclareMathSymbol{\gtreqqless}{\mathrel}{AMSa}{"54}
\expandafter\let\csname Rrightarrow\endcsname\relax
\DeclareMathSymbol{\Rrightarrow}{\mathrel}{AMSa}{"56}
\expandafter\let\csname Lleftarrow\endcsname\relax
\DeclareMathSymbol{\Lleftarrow}{\mathrel}{AMSa}{"57}
\expandafter\let\csname veebar\endcsname\relax
\DeclareMathSymbol{\veebar}{\mathbin}{AMSa}{"59}
\expandafter\let\csname barwedge\endcsname\relax
\DeclareMathSymbol{\barwedge}{\mathbin}{AMSa}{"5A}
\expandafter\let\csname doublebarwedge\endcsname\relax
\DeclareMathSymbol{\doublebarwedge}{\mathbin}{AMSa}{"5B}
\expandafter\let\csname measuredangle\endcsname\relax
\DeclareMathSymbol{\measuredangle}{\mathord}{AMSa}{"5D}
\expandafter\let\csname sphericalangle\endcsname\relax
\DeclareMathSymbol{\sphericalangle}{\mathord}{AMSa}{"5E}
\expandafter\let\csname varpropto\endcsname\relax
\DeclareMathSymbol{\varpropto}{\mathrel}{AMSa}{"5F}
\expandafter\let\csname smallsmile\endcsname\relax
\DeclareMathSymbol{\smallsmile}{\mathrel}{AMSa}{"60}
\expandafter\let\csname smallfrown\endcsname\relax
\DeclareMathSymbol{\smallfrown}{\mathrel}{AMSa}{"61}
\expandafter\let\csname Subset\endcsname\relax
\DeclareMathSymbol{\Subset}{\mathrel}{AMSa}{"62}
\expandafter\let\csname Supset\endcsname\relax
\DeclareMathSymbol{\Supset}{\mathrel}{AMSa}{"63}
\expandafter\let\csname Cup\endcsname\relax
\DeclareMathSymbol{\Cup}{\mathbin}{AMSa}{"64}
\expandafter\let\csname Cap\endcsname\relax
\DeclareMathSymbol{\Cap}{\mathbin}{AMSa}{"65}
\expandafter\let\csname curlywedge\endcsname\relax
\DeclareMathSymbol{\curlywedge}{\mathbin}{AMSa}{"66}
\expandafter\let\csname curlyvee\endcsname\relax
\DeclareMathSymbol{\curlyvee}{\mathbin}{AMSa}{"67}
\expandafter\let\csname leftthreetimes\endcsname\relax
\DeclareMathSymbol{\leftthreetimes}{\mathbin}{AMSa}{"68}
\expandafter\let\csname subseteqq\endcsname\relax
\DeclareMathSymbol{\subseteqq}{\mathrel}{AMSa}{"6A}
\expandafter\let\csname supseteqq\endcsname\relax
\DeclareMathSymbol{\supseteqq}{\mathrel}{AMSa}{"6B}
\expandafter\let\csname bumpeq\endcsname\relax
\DeclareMathSymbol{\bumpeq}{\mathrel}{AMSa}{"6C}
\expandafter\let\csname Bumpeq\endcsname\relax
\DeclareMathSymbol{\Bumpeq}{\mathrel}{AMSa}{"6D}
\expandafter\let\csname lll\endcsname\relax
\DeclareMathSymbol{\lll}{\mathrel}{AMSa}{"6E}
\expandafter\let\csname ggg\endcsname\relax
\DeclareMathSymbol{\ggg}{\mathrel}{AMSa}{"6F}
\expandafter\let\csname circledS\endcsname\relax
\DeclareMathSymbol{\circledS}{\mathord}{AMSa}{"73}
\expandafter\let\csname pitchfork\endcsname\relax
\DeclareMathSymbol{\pitchfork}{\mathrel}{AMSa}{"74}
\expandafter\let\csname dotplus\endcsname\relax
\DeclareMathSymbol{\dotplus}{\mathbin}{AMSa}{"75}
\expandafter\let\csname backsim\endcsname\relax
\DeclareMathSymbol{\backsim}{\mathrel}{AMSa}{"76}
\expandafter\let\csname backsimeq\endcsname\relax
\DeclareMathSymbol{\backsimeq}{\mathrel}{AMSa}{"77}
\expandafter\let\csname complement\endcsname\relax
\DeclareMathSymbol{\complement}{\mathord}{AMSa}{"7B}
\expandafter\let\csname intercal\endcsname\relax
\DeclareMathSymbol{\intercal}{\mathbin}{AMSa}{"7C}
\expandafter\let\csname circledcirc\endcsname\relax
\DeclareMathSymbol{\circledcirc}{\mathbin}{AMSa}{"7D}
\expandafter\let\csname circledast\endcsname\relax
\DeclareMathSymbol{\circledast}{\mathbin}{AMSa}{"7E}
\expandafter\let\csname circleddash\endcsname\relax
\DeclareMathSymbol{\circleddash}{\mathbin}{AMSa}{"7F}
\expandafter\let\csname lvertneqq\endcsname\relax
\DeclareMathSymbol{\lvertneqq}{\mathrel}{AMSb}{"00}
\expandafter\let\csname gvertneqq\endcsname\relax
\DeclareMathSymbol{\gvertneqq}{\mathrel}{AMSb}{"01}
\expandafter\let\csname nleq\endcsname\relax
\DeclareMathSymbol{\nleq}{\mathrel}{AMSb}{"02}
\expandafter\let\csname ngeq\endcsname\relax
\DeclareMathSymbol{\ngeq}{\mathrel}{AMSb}{"03}
\expandafter\let\csname nless\endcsname\relax
\DeclareMathSymbol{\nless}{\mathrel}{AMSb}{"04}
\expandafter\let\csname ngtr\endcsname\relax
\DeclareMathSymbol{\ngtr}{\mathrel}{AMSb}{"05}
\expandafter\let\csname nprec\endcsname\relax
\DeclareMathSymbol{\nprec}{\mathrel}{AMSb}{"06}
\expandafter\let\csname nsucc\endcsname\relax
\DeclareMathSymbol{\nsucc}{\mathrel}{AMSb}{"07}
\expandafter\let\csname lneqq\endcsname\relax
\DeclareMathSymbol{\lneqq}{\mathrel}{AMSb}{"08}
\expandafter\let\csname gneqq\endcsname\relax
\DeclareMathSymbol{\gneqq}{\mathrel}{AMSb}{"09}
\expandafter\let\csname nleqslant\endcsname\relax
\DeclareMathSymbol{\nleqslant}{\mathrel}{AMSb}{"0A}
\expandafter\let\csname ngeqslant\endcsname\relax
\DeclareMathSymbol{\ngeqslant}{\mathrel}{AMSb}{"0B}
\expandafter\let\csname lneq\endcsname\relax
\DeclareMathSymbol{\lneq}{\mathrel}{AMSb}{"0C}
\expandafter\let\csname gneq\endcsname\relax
\DeclareMathSymbol{\gneq}{\mathrel}{AMSb}{"0D}
\expandafter\let\csname npreceq\endcsname\relax
\DeclareMathSymbol{\npreceq}{\mathrel}{AMSb}{"0E}
\expandafter\let\csname nsucceq\endcsname\relax
\DeclareMathSymbol{\nsucceq}{\mathrel}{AMSb}{"0F}
\expandafter\let\csname precnsim\endcsname\relax
\DeclareMathSymbol{\precnsim}{\mathrel}{AMSb}{"10}
\expandafter\let\csname succnsim\endcsname\relax
\DeclareMathSymbol{\succnsim}{\mathrel}{AMSb}{"11}
\expandafter\let\csname lnsim\endcsname\relax
\DeclareMathSymbol{\lnsim}{\mathrel}{AMSb}{"12}
\expandafter\let\csname gnsim\endcsname\relax
\DeclareMathSymbol{\gnsim}{\mathrel}{AMSb}{"13}
\expandafter\let\csname nleqq\endcsname\relax
\DeclareMathSymbol{\nleqq}{\mathrel}{AMSb}{"14}
\expandafter\let\csname ngeqq\endcsname\relax
\DeclareMathSymbol{\ngeqq}{\mathrel}{AMSb}{"15}
\expandafter\let\csname precneqq\endcsname\relax
\DeclareMathSymbol{\precneqq}{\mathrel}{AMSb}{"16}
\expandafter\let\csname succneqq\endcsname\relax
\DeclareMathSymbol{\succneqq}{\mathrel}{AMSb}{"17}
\expandafter\let\csname precnapprox\endcsname\relax
\DeclareMathSymbol{\precnapprox}{\mathrel}{AMSb}{"18}
\expandafter\let\csname succnapprox\endcsname\relax
\DeclareMathSymbol{\succnapprox}{\mathrel}{AMSb}{"19}
\expandafter\let\csname lnapprox\endcsname\relax
\DeclareMathSymbol{\lnapprox}{\mathrel}{AMSb}{"1A}
\expandafter\let\csname gnapprox\endcsname\relax
\DeclareMathSymbol{\gnapprox}{\mathrel}{AMSb}{"1B}
\expandafter\let\csname nsim\endcsname\relax
\DeclareMathSymbol{\nsim}{\mathrel}{AMSb}{"1C}
\expandafter\let\csname ncong\endcsname\relax
\DeclareMathSymbol{\ncong}{\mathrel}{AMSb}{"1D}
\expandafter\let\csname diagup\endcsname\relax
\DeclareMathSymbol{\diagup}{\mathord}{AMSb}{"1E}
\expandafter\let\csname diagdown\endcsname\relax
\DeclareMathSymbol{\diagdown}{\mathord}{AMSb}{"1F}
\expandafter\let\csname varsubsetneq\endcsname\relax
\DeclareMathSymbol{\varsubsetneq}{\mathrel}{AMSb}{"20}
\expandafter\let\csname varsupsetneq\endcsname\relax
\DeclareMathSymbol{\varsupsetneq}{\mathrel}{AMSb}{"21}
\expandafter\let\csname nsubseteqq\endcsname\relax
\DeclareMathSymbol{\nsubseteqq}{\mathrel}{AMSb}{"22}
\expandafter\let\csname nsupseteqq\endcsname\relax
\DeclareMathSymbol{\nsupseteqq}{\mathrel}{AMSb}{"23}
\expandafter\let\csname subsetneqq\endcsname\relax
\DeclareMathSymbol{\subsetneqq}{\mathrel}{AMSb}{"24}
\expandafter\let\csname supsetneqq\endcsname\relax
\DeclareMathSymbol{\supsetneqq}{\mathrel}{AMSb}{"25}
\expandafter\let\csname varsubsetneqq\endcsname\relax
\DeclareMathSymbol{\varsubsetneqq}{\mathrel}{AMSb}{"26}
\expandafter\let\csname varsupsetneqq\endcsname\relax
\DeclareMathSymbol{\varsupsetneqq}{\mathrel}{AMSb}{"27}
\expandafter\let\csname subsetneq\endcsname\relax
\DeclareMathSymbol{\subsetneq}{\mathrel}{AMSb}{"28}
\expandafter\let\csname supsetneq\endcsname\relax
\DeclareMathSymbol{\supsetneq}{\mathrel}{AMSb}{"29}
\expandafter\let\csname nsubseteq\endcsname\relax
\DeclareMathSymbol{\nsubseteq}{\mathrel}{AMSb}{"2A}
\expandafter\let\csname nsupseteq\endcsname\relax
\DeclareMathSymbol{\nsupseteq}{\mathrel}{AMSb}{"2B}
\expandafter\let\csname nparallel\endcsname\relax
\DeclareMathSymbol{\nparallel}{\mathrel}{AMSb}{"2C}
\expandafter\let\csname nmid\endcsname\relax
\DeclareMathSymbol{\nmid}{\mathrel}{AMSb}{"2D}
\expandafter\let\csname nshortmid\endcsname\relax
\DeclareMathSymbol{\nshortmid}{\mathrel}{AMSb}{"2E}
\expandafter\let\csname nshortparallel\endcsname\relax
\DeclareMathSymbol{\nshortparallel}{\mathrel}{AMSb}{"2F}
\expandafter\let\csname nvdash\endcsname\relax
\DeclareMathSymbol{\nvdash}{\mathrel}{AMSb}{"30}
\expandafter\let\csname nVdash\endcsname\relax
\DeclareMathSymbol{\nVdash}{\mathrel}{AMSb}{"31}
\expandafter\let\csname nvDash\endcsname\relax
\DeclareMathSymbol{\nvDash}{\mathrel}{AMSb}{"32}
\expandafter\let\csname nVDash\endcsname\relax
\DeclareMathSymbol{\nVDash}{\mathrel}{AMSb}{"33}
\expandafter\let\csname ntriangleleft\endcsname\relax
\DeclareMathSymbol{\ntriangleleft}{\mathrel}{AMSb}{"36}
\expandafter\let\csname ntriangleright\endcsname\relax
\DeclareMathSymbol{\ntriangleright}{\mathrel}{AMSb}{"37}
\expandafter\let\csname nleftarrow\endcsname\relax
\DeclareMathSymbol{\nleftarrow}{\mathrel}{AMSb}{"38}
\expandafter\let\csname nrightarrow\endcsname\relax
\DeclareMathSymbol{\nrightarrow}{\mathrel}{AMSb}{"39}
\expandafter\let\csname nLeftarrow\endcsname\relax
\DeclareMathSymbol{\nLeftarrow}{\mathrel}{AMSb}{"3A}
\expandafter\let\csname nRightarrow\endcsname\relax
\DeclareMathSymbol{\nRightarrow}{\mathrel}{AMSb}{"3B}
\expandafter\let\csname divideontimes\endcsname\relax
\DeclareMathSymbol{\divideontimes}{\mathbin}{AMSb}{"3E}
\expandafter\let\csname varnothing\endcsname\relax
\DeclareMathSymbol{\varnothing}{\mathord}{AMSb}{"3F}
\expandafter\let\csname nexists\endcsname\relax
\DeclareMathSymbol{\nexists}{\mathord}{AMSb}{"40}
\expandafter\let\csname Finv\endcsname\relax
\DeclareMathSymbol{\Finv}{\mathord}{AMSb}{"60}
\expandafter\let\csname Game\endcsname\relax
\DeclareMathSymbol{\Game}{\mathord}{AMSb}{"61}
\expandafter\let\csname eth\endcsname\relax
\DeclareMathSymbol{\eth}{\mathord}{AMSb}{"67}
\expandafter\let\csname eqsim\endcsname\relax
\DeclareMathSymbol{\eqsim}{\mathrel}{AMSb}{"68}
\expandafter\let\csname beth\endcsname\relax
\DeclareMathSymbol{\beth}{\mathord}{AMSb}{"69}
\expandafter\let\csname gimel\endcsname\relax
\DeclareMathSymbol{\gimel}{\mathord}{AMSb}{"6A}
\expandafter\let\csname daleth\endcsname\relax
\DeclareMathSymbol{\daleth}{\mathord}{AMSb}{"6B}
\expandafter\let\csname lessdot\endcsname\relax
\DeclareMathSymbol{\lessdot}{\mathbin}{AMSb}{"6C}
\expandafter\let\csname gtrdot\endcsname\relax
\DeclareMathSymbol{\gtrdot}{\mathbin}{AMSb}{"6D}
\expandafter\let\csname ltimes\endcsname\relax
\DeclareMathSymbol{\ltimes}{\mathbin}{AMSb}{"6E}
\expandafter\let\csname rtimes\endcsname\relax
\DeclareMathSymbol{\rtimes}{\mathbin}{AMSb}{"6F}
\expandafter\let\csname shortmid\endcsname\relax
\DeclareMathSymbol{\shortmid}{\mathrel}{AMSb}{"70}
\expandafter\let\csname shortparallel\endcsname\relax
\DeclareMathSymbol{\shortparallel}{\mathrel}{AMSb}{"71}
\expandafter\let\csname smallsetminus\endcsname\relax
\DeclareMathSymbol{\smallsetminus}{\mathbin}{AMSb}{"72}
\expandafter\let\csname thicksim\endcsname\relax
\DeclareMathSymbol{\thicksim}{\mathrel}{AMSb}{"73}
\expandafter\let\csname thickapprox\endcsname\relax
\DeclareMathSymbol{\thickapprox}{\mathrel}{AMSb}{"74}
\expandafter\let\csname approxeq\endcsname\relax
\DeclareMathSymbol{\approxeq}{\mathrel}{AMSb}{"75}
\expandafter\let\csname succapprox\endcsname\relax
\DeclareMathSymbol{\succapprox}{\mathrel}{AMSb}{"76}
\expandafter\let\csname precapprox\endcsname\relax
\DeclareMathSymbol{\precapprox}{\mathrel}{AMSb}{"77}
\expandafter\let\csname curvearrowleft\endcsname\relax
\DeclareMathSymbol{\curvearrowleft}{\mathrel}{AMSb}{"78}
\expandafter\let\csname digamma\endcsname\relax
\DeclareMathSymbol{\digamma}{\mathord}{AMSb}{"7A}
\expandafter\let\csname varkappa\endcsname\relax
\DeclareMathSymbol{\varkappa}{\mathord}{AMSb}{"7B}
\expandafter\let\csname Bbbk\endcsname\relax
\DeclareMathSymbol{\Bbbk}{\mathord}{AMSb}{"7C}
\expandafter\let\csname hslash\endcsname\relax
\DeclareMathSymbol{\hslash}{\mathord}{AMSb}{"7D}
\expandafter\let\csname backepsilon\endcsname\relax
\DeclareMathSymbol{\backepsilon}{\mathrel}{AMSb}{"7F}
\expandafter\let\csname downharpoonleft\endcsname\relax
\DeclareMathSymbol{\downharpoonleft}{\mathrel}{AMSa}{"19}
\expandafter\let\csname leftrightarrows\endcsname\relax
\DeclareMathSymbol{\leftrightarrows}{\mathrel}{AMSa}{"1C}
\expandafter\let\csname rightleftarrows\endcsname\relax
\DeclareMathSymbol{\rightleftarrows}{\mathrel}{AMSa}{"1D}
\expandafter\let\csname leftrightsquigarrow\endcsname\relax
\DeclareMathSymbol{\leftrightsquigarrow}{\mathrel}{AMSa}{"21}
\expandafter\let\csname fallingdotseq\endcsname\relax
\DeclareMathSymbol{\fallingdotseq}{\mathrel}{AMSa}{"3B}
\expandafter\let\csname vartriangleright\endcsname\relax
\DeclareMathSymbol{\vartriangleright}{\mathrel}{AMSa}{"42}
\expandafter\let\csname rightthreetimes\endcsname\relax
\DeclareMathSymbol{\rightthreetimes}{\mathbin}{AMSa}{"69}
\expandafter\let\csname ntrianglerighteq\endcsname\relax
\DeclareMathSymbol{\ntrianglerighteq}{\mathrel}{AMSb}{"34}
\expandafter\let\csname ntrianglelefteq\endcsname\relax
\DeclareMathSymbol{\ntrianglelefteq}{\mathrel}{AMSb}{"35}
\expandafter\let\csname nLeftrightarrow\endcsname\relax
\DeclareMathSymbol{\nLeftrightarrow}{\mathrel}{AMSb}{"3C}
\expandafter\let\csname nleftrightarrow\endcsname\relax
\DeclareMathSymbol{\nleftrightarrow}{\mathrel}{AMSb}{"3D}
\expandafter\let\csname curvearrowright\endcsname\relax
\DeclareMathSymbol{\curvearrowright}{\mathrel}{AMSb}{"79}

\DeclareMathSizes{5}{5}{5}{5}
\DeclareMathSizes{6}{6}{5}{5}
\DeclareMathSizes{7}{7}{5}{5}
\DeclareMathSizes{8}{8}{6}{5}
\DeclareMathSizes{9}{9}{6}{5}
\DeclareMathSizes{\@xpt}{\@xpt}{7}{5}
\DeclareMathSizes{\@xipt}{\@xipt}{8}{6}
\DeclareMathSizes{\@xiipt}{\@xiipt}{8}{6}
\DeclareMathSizes{\@xivpt}{\@xivpt}{\@xpt}{7}
\DeclareMathSizes{\@xviipt}{\@xviipt}{\@xiipt}{\@xpt}
\DeclareMathSizes{\@xxpt}{\@xxpt}{\@xivpt}{\@xiipt}
\DeclareMathSizes{\@xxvpt}{\@xxvpt}{\@xxpt}{\@xviipt}

\mathcode`\%="0025
\mathcode`\&="0026
\mathcode`\#="0023
\mathcode`\\="026E

%% file: figures/xir-icons.tex
\definecolor{xirAppColor}{RGB}{70,76,86}
\definecolor{xirGatewayColor}{RGB}{32,99,155}
\definecolor{xirProtocolColor}{RGB}{194,103,22}
\definecolor{xirControlColor}{RGB}{42,126,92}
\definecolor{xirActorColor}{RGB}{111,72,146}
\definecolor{xirResultColor}{RGB}{180,86,20}

\tikzset{
  xir icon/.style={line width=0.62pt,line cap=round,line join=round},
  pics/xir-app/.style={code={
    \draw[xir icon,draw=xirAppColor,fill=xirAppColor!7] (-.42,-.28) rectangle (.42,.28);
    \draw[xir icon,draw=xirAppColor] (-.42,.13)--(.42,.13);
    \fill[xirAppColor] (-.31,.205) circle (.025) (-.23,.205) circle (.025);
    \draw[xir icon,draw=xirAppColor] (-.18,-.01)--(-.28,-.08)--(-.18,-.15)
      (.18,-.01)--(.28,-.08)--(.18,-.15) (-.05,-.17)--(.05,.01);
  }},
  pics/xir-gateway/.style={code={
    \path[draw=xirGatewayColor,fill=xirGatewayColor!9,xir icon]
      (-.34,.28)--(.34,.28)--(.31,-.04)
      .. controls (.28,-.22) and (.12,-.32) .. (0,-.37)
      .. controls (-.12,-.32) and (-.28,-.22) .. (-.31,-.04)--cycle;
    \draw[xir icon,draw=xirGatewayColor,-{Latex[length=1.2mm]}] (-.24,.03)--(.22,.03);
    \draw[xir icon,draw=xirGatewayColor,-{Latex[length=1.2mm]}] (.20,-.12)--(-.22,-.12);
  }},
  pics/xir-protocol/.style={code={
    \draw[xir icon,draw=xirProtocolColor,fill=xirProtocolColor!9,rounded corners=.03cm]
      (-.43,-.25) rectangle (.43,.25);
    \draw[xir icon,draw=xirProtocolColor] (-.41,.22)--(0,-.07)--(.41,.22);
    \draw[xir icon,draw=xirProtocolColor] (-.41,-.22)--(-.11,.02) (.41,-.22)--(.11,.02);
  }},
  pics/xir-registry/.style={code={
    \path[draw=xirControlColor,fill=xirControlColor!8,xir icon]
      (-.36,.19) arc[start angle=180,end angle=360,x radius=.36,y radius=.13]
      -- (.36,-.19) arc[start angle=0,end angle=-180,x radius=.36,y radius=.13]
      -- cycle;
    \draw[xir icon,draw=xirControlColor] (-.36,.19) arc[start angle=180,end angle=540,x radius=.36,y radius=.13];
    \draw[xir icon,draw=xirControlColor] (-.36,.02) arc[start angle=180,end angle=360,x radius=.36,y radius=.13];
  }},
  pics/xir-router/.style={code={
    \draw[xir icon,draw=xirActorColor,-{Latex[length=1.3mm]}] (-.40,0)--(-.08,0)--(.28,.25);
    \draw[xir icon,draw=xirActorColor,-{Latex[length=1.3mm]}] (-.08,0)--(.35,0);
    \draw[xir icon,draw=xirActorColor,-{Latex[length=1.3mm]}] (-.08,0)--(.28,-.25);
    \fill[xirActorColor] (-.10,0) circle (.055);
  }},
  pics/xir-relayer/.style={code={
    \draw[xir icon,draw=xirActorColor] (-.18,-.27)--(0,.08)--(.18,-.27) (-.11,-.13)--(.11,-.13);
    \fill[xirActorColor] (0,.08) circle (.045);
    \draw[xir icon,draw=xirActorColor] (-.12,.13) arc[start angle=145,end angle=35,radius=.15];
    \draw[xir icon,draw=xirActorColor] (-.23,.18) arc[start angle=145,end angle=35,radius=.28];
  }},
  pics/xir-evidence/.style={code={
    \path[draw=xirControlColor,fill=xirControlColor!7,xir icon]
      (-.31,-.34)--(-.31,.34)--(.13,.34)--(.31,.16)--(.31,-.34)--cycle;
    \draw[xir icon,draw=xirControlColor] (.13,.34)--(.13,.16)--(.31,.16);
    \draw[xir icon,draw=xirControlColor] (-.18,-.03)--(-.05,-.16)--(.19,.10);
  }},
  pics/xir-receipt/.style={code={
    \path[draw=xirProtocolColor,fill=xirProtocolColor!7,xir icon]
      (-.34,-.34)--(-.34,.34)--(.11,.34)--(.30,.15)--(.30,-.34)--cycle;
    \draw[xir icon,draw=xirProtocolColor] (.11,.34)--(.11,.15)--(.30,.15);
    \draw[xir icon,draw=xirProtocolColor] (-.18,.08)--(.13,.08) (-.18,-.04)--(.13,-.04);
    \draw[xir icon,draw=xirProtocolColor] (-.08,-.19) circle (.10) (.10,-.19) circle (.10);
  }},
  pics/xir-policy/.style={code={
    \draw[xir icon,draw=xirControlColor,fill=xirControlColor!7,rounded corners=.025cm]
      (-.34,-.32) rectangle (.34,.30);
    \draw[xir icon,draw=xirControlColor] (-.13,.35) rectangle (.13,.25);
    \draw[xir icon,draw=xirControlColor] (-.23,.11)--(-.17,.04)--(-.08,.15) (.02,.10)--(.23,.10);
    \draw[xir icon,draw=xirControlColor] (-.23,-.10)--(-.17,-.17)--(-.08,-.06) (.02,-.11)--(.23,-.11);
  }},
  pics/xir-replay/.style={code={
    \draw[xir icon,draw=xirGatewayColor,fill=xirGatewayColor!8,rounded corners=.03cm]
      (-.31,-.28) rectangle (.31,.10);
    \draw[xir icon,draw=xirGatewayColor] (-.18,.10)--(-.18,.22)
      arc[start angle=180,end angle=0,x radius=.18,y radius=.18]--(.18,.10);
    \fill[xirGatewayColor] (0,-.05) circle (.045);
    \draw[xir icon,draw=xirGatewayColor] (0,-.08)--(0,-.18);
  }},
  pics/xir-network/.style={code={
    \draw[xir icon,draw=xirAppColor,fill=xirAppColor!6,rounded corners=.02cm]
      (-.40,-.31) rectangle (.40,.31);
    \draw[xir icon,draw=xirAppColor] (-.40,.10)--(.40,.10) (-.40,-.11)--(.40,-.11);
    \fill[xirAppColor] (-.27,.205) circle (.035) (-.27,0) circle (.035) (-.27,-.205) circle (.035);
    \draw[xir icon,draw=xirAppColor] (-.13,.205)--(.27,.205) (-.13,0)--(.27,0) (-.13,-.205)--(.27,-.205);
  }},
  pics/xir-result/.style={code={
    \draw[xir icon,draw=xirResultColor,fill=xirResultColor!9] (0,0) circle (.34);
    \draw[xir icon,draw=xirResultColor,line width=.9pt] (-.17,-.01)--(-.05,-.14)--(.19,.14);
  }}
}

%% file: generated/paper-values.tex
\newcommand{\DirectRate}{14.64}

\newcommand{\ProtocolSwitchingRate}{96.86}

\expandafter\newcommand\csname HHMedian\endcsname{7.103}
\expandafter\newcommand\csname HHP95\endcsname{12.157}
\expandafter\newcommand\csname HHP99\endcsname{17.240}
\expandafter\newcommand\csname HHCoordinatorTransactionsTotal\endcsname{10,000}
\expandafter\newcommand\csname HHCoordinatorGasTotal\endcsname{941,999,905}
\expandafter\newcommand\csname HHCoordinatorCalldataTotal\endcsname{4,360,000}
\expandafter\newcommand\csname LLMedian\endcsname{5.159}
\expandafter\newcommand\csname LLP95\endcsname{6.154}
\expandafter\newcommand\csname LLP99\endcsname{7.180}
\expandafter\newcommand\csname LLCoordinatorTransactionsTotal\endcsname{10,000}
\expandafter\newcommand\csname LLCoordinatorGasTotal\endcsname{1,919,916,016}
\expandafter\newcommand\csname LLCoordinatorCalldataTotal\endcsname{4,680,000}
\expandafter\newcommand\csname HLMedian\endcsname{10.955}
\expandafter\newcommand\csname HLP95\endcsname{14.893}
\expandafter\newcommand\csname HLP99\endcsname{19.985}
\expandafter\newcommand\csname HLCoordinatorTransactionsTotal\endcsname{50,000}
\expandafter\newcommand\csname HLCoordinatorGasTotal\endcsname{7,219,771,195}
\expandafter\newcommand\csname HLCoordinatorCalldataTotal\endcsname{54,440,000}
\expandafter\newcommand\csname LHMedian\endcsname{10.918}
\expandafter\newcommand\csname LHP95\endcsname{14.011}
\expandafter\newcommand\csname LHP99\endcsname{19.079}
\expandafter\newcommand\csname LHCoordinatorTransactionsTotal\endcsname{50,000}
\expandafter\newcommand\csname LHCoordinatorGasTotal\endcsname{7,287,656,584}
\expandafter\newcommand\csname LHCoordinatorCalldataTotal\endcsname{56,360,000}

%% file: sections/1-introduction.tex
\section{Introduction}
\label{sec:intro}

Blockchain applications increasingly use multiple networks to access assets,
liquidity, computation, and users.  Cross-chain protocols provide the
communication layer between these networks.  A source endpoint sends an
application cross-chain message, a verification mechanism authenticates the source event,
and a destination endpoint executes the requested operation
~\cite{hardjono2018design,zamyatin2021sok,belchior2021survey}.  Protocols use
different trust models and endpoint interfaces~\cite{belchior2021survey}.
Related approaches include zkBridge~\cite{xie2022zkbridge}, cross-chain
integrated execution~\cite{yin2025atomic}, SightCVC~\cite{yang2025sightcvc},
and MAP~\cite{cao2025map}.

The point-to-point configuration model makes direct communication depend on
an explicitly configured source--destination pair. LayerZero OApps~\cite{layerzerooapp}
register a trusted peer for an endpoint identifier, Hyperlane
routers~\cite{hyperlanerouter} register remote domains, and other protocols
register emitters~\cite{wormholevaa}, receivers, or equivalent
endpoints~\cite{ccipofframp}.
For an application on blockchain $A$ to reach blockchain $C$ directly, this
model requires an $A\!\to\!C$ configuration. Reusing existing
$A\!\to\!B$ and $B\!\to\!C$ connections instead requires an intermediate
handoff that preserves the application operation and the verification results
from both deliveries. When the connections use different protocols, the
handoff must also bridge their message formats and verifier interfaces.

The extent of this opportunity appears in activity feeds from
Axelar~\cite{axelarscanapi}, CCIP~\cite{ccipatlasfeed},
Hyperlane~\cite{hyperlanegraphql}, LayerZero~\cite{layerzeroscanapi},
Relay~\cite{relayapi}, and Wormhole~\cite{wormholescanapi}.
After resolving network identities and retaining mainnet events between
different blockchains, the January--October 2025 snapshot contains
24,983,410 events from 286 active blockchains. Its 15,965 protocol-labeled
directed edges represent 11,935 directly connected ordered pairs, or
\DirectRate\% of the $286\times285=81{,}510$ possible pairs.
Full direct connectivity through one protocol would require 81,510
point-to-point configurations. Composing the observed connections offers a
way to reach additional pairs through intermediate blockchains.

IBC multi-hop channels~\cite{ibcmultihop} and xRoute~\cite{xroute2026}
support path composition within IBC. Axelar General Message
Passing~\cite{axelargmp} exposes a common messaging service, while
service-mesh interoperability~\cite{kapsoulis2026mesh} coordinates cross-chain
transactions through mesh services.  The remaining problem is to combine
existing direct protocol connections into same-protocol and cross-protocol
paths while preserving verification and execution across protocol boundaries.

Combining direct connections across protocol boundaries creates two research
challenges.

\textbf{Challenge 1: verification continuity.}  Each protocol
authenticates its own messages using a specific verifier configuration and
evidence format.  The final protocol authenticates the last delivery. Establishing the earlier
verification history requires each preceding result to be bound to the same
operation and path. For example, two individually valid receipts from
different executions must be rejected when presented as one history.

\textbf{Challenge 2: execution continuity.}  Protocols use different message
formats, endpoint identifiers, callbacks, and replay checks.  A message that
leaves one protocol must retain the same application identity, destination,
path position, and replay identity when it enters the next protocol.

To address these challenges, this paper proposes XIR, a framework that composes existing cross-chain protocol connections through a verifiable intermediate representation.
This representation binds one application cross-chain message to an ordered record of authenticated protocol deliveries, preserving verification evidence across protocol boundaries.
XIR comprises XIR Gateways that manage this record, XIR Adapters that connect protocol interfaces, an XIR Registry that stores configurations, and an XIR Router that selects paths.
On the source blockchain, an XIR Gateway authenticates the application and creates the record.
On each intermediate blockchain, an XIR Gateway verifies the incoming delivery and extends the record.
On the destination blockchain, an XIR Gateway verifies the complete history before execution.
These operations provide verification continuity (Challenge~1) by linking each delivery to the preceding history, and execution continuity (Challenge~2) by maintaining stable message identity across protocols.
Under the stated assumptions, the analysis establishes source authorization, ordered evidence integrity, per-hop policy enforcement, and at-most-once delivery.

With XIR Gateways and XIR Adapters at the required intermediate
blockchains, same-protocol composition over the cumulative graph reaches
46,187 ordered pairs. Allowing protocol switches expands this set to 78,953
pairs, or
\ProtocolSwitchingRate\% of the 81,510-pair maximum.  Hyperlane has the
largest same-protocol closure at 27,059 pairs (33.20\%).
Figure~\ref{fig:protocol-xir-reachability} shows how the two forms of
composition expand reachability over the existing direct connections.

\begin{figure*}[t]
\centering
\includegraphics[width=\textwidth]{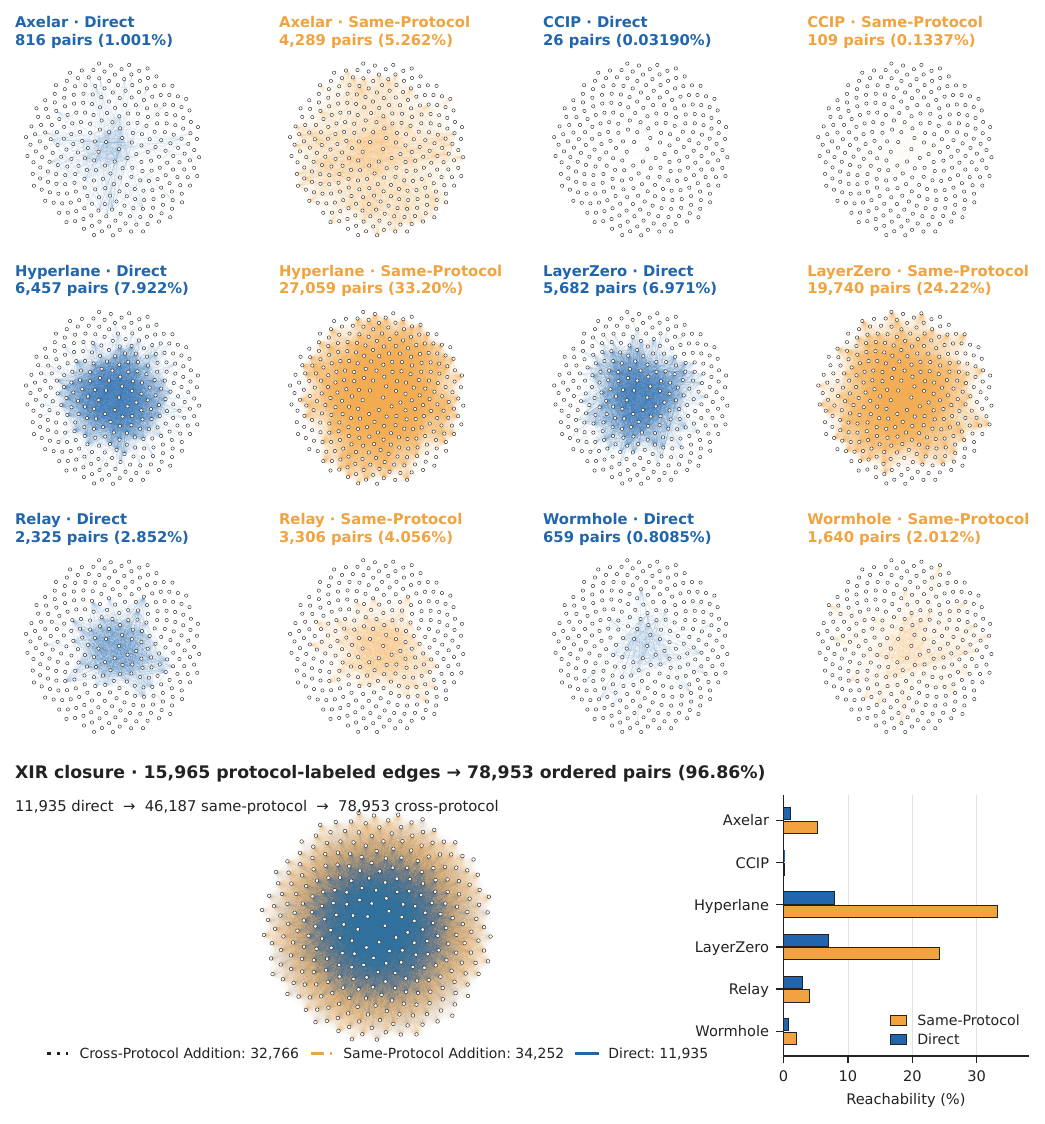}
\caption{\textbf{Reachability in the measured 286-blockchain graph.} The 15,965 protocol-labeled directed edges connect 11,935 pairs directly. Same-protocol composition reaches 46,187 pairs, and cross-protocol composition reaches 78,953 pairs.}
\label{fig:protocol-xir-reachability}
\end{figure*}

Figure~\ref{fig:xir-user-workflow} next compares direct protocol configuration
with XIR path composition from the application user's perspective.

Reproducing the 78,953-pair target with direct connections requires 78,953
point-to-point protocol configurations, including 67,018 configurations for
pairs that are not directly connected in the measured graph.  XIR composes the
15,965 existing protocol connections through one XIR Gateway placement on
each of the 286 blockchains.  The observed protocol coverage also
contains 493 blockchain--protocol integrations, one for each protocol
available on each blockchain.  Point-to-point configurations, XIR Gateway deployments, and
blockchain--protocol integrations are separate units.  Section~\ref{sec:empirical-cost-analysis}
reports these architecture counts separately and compares deployment cost in
gas.

The XIR prototype integrates Hyperlane and LayerZero, and its implementation
is open source.\footnote{XIR implementation:
\url{https://github.com/lysrain21/XIR}.}
It completes 40,000 local deliveries and 780 adversarial case-runs, with seven
successful deliveries recorded on public testnets. A separate campaign of
22,000 executions measures the cost of verification history and multi-hop
paths. Section~\ref{sec:implementation} presents the experimental settings,
cost models, and results.

This paper develops a framework for composing cross-chain protocols while
preserving message identity and verification history. Its main contributions
are summarized as follows.

\begin{itemize}
  \item \textbf{Verifiable intermediate representation.} XIR binds the application cross-chain
  message to an ordered record of authenticated protocol
  deliveries, enabling the destination to verify evidence accumulated across
  protocol boundaries. Sections~\ref{sec:model} and~\ref{sec:design} define
  the representation and its verification procedure.

  \item \textbf{Protocol composition with explicit guarantees.} XIR
  Gateways and XIR Adapters reuse configured connections for same-protocol and
  cross-protocol paths while preserving the application record.
  Section~\ref{sec:analysis} establishes source authorization, ordered
  evidence integrity, per-hop policy enforcement, and at-most-once delivery
  under the stated assumptions.

  \item \textbf{Open-source implementation and evaluation.} 
  We developed an XIR prototype system based on the Go, integrated with Hyperlane and LayerZero cross-chain protocols. Based on an analysis of 25 million cross-chain transactions, we evaluated the reachability of XIR and validated its functionality and overhead. Experimental results show that XIR achieves 96.86\% reachability on the cross-chain transaction graph. Compared with existing cross-chain protocols, XIR reduces the deployment overhead by 67,018 cross-chain contract pairs. Sections~\ref{sec:empirical-cost-analysis}
  and~\ref{sec:implementation} present these results.
\end{itemize}


%% file: sections/2-background.tex
\section{Background and Problem Formulation}
\label{sec:background}

\subsection{Cross-Chain Protocols and Protocol Interoperability}

Blockchain interoperability allows different blockchains to exchange and use
information
~\cite{hardjono2018design,belchior2021survey}.  A cross-chain protocol
provides three common functions: a source endpoint emits an application
cross-chain message, a verification mechanism authenticates the source event, and a
relayer, executor, or solver submits the cross-chain message for
execution at the destination~\cite{zamyatin2021sok,falazi2024crosschain}.
Table~\ref{tab:protocol-comparison} maps these functions to the six protocols
in the measured dataset and to XIR.

\begin{table*}[t]
\centering
\caption{Cross-chain protocol functions and the corresponding XIR components.}
\label{tab:protocol-comparison}
\scriptsize
\setlength{\tabcolsep}{3pt}
\renewcommand{\arraystretch}{1.06}
\begin{tabular}{@{}p{0.09\textwidth}p{0.26\textwidth}p{0.28\textwidth}p{0.29\textwidth}@{}}
\toprule
System & Message verification & Delivery and execution & Endpoint configuration \\
\midrule
Axelar~\cite{axelar} & Validator-authorized Gateway commands
& Relayer submission and \texttt{AxelarExecutable} callback
& Chain, Gateway, and destination identifiers~\cite{axelargatewayinterface} \\
CCIP v1.6~\cite{ccip} & Oracle network running the Commit plugin
& Executing plugin, OffRamp, and receiver callback
& Router, OnRamp/OffRamp, lane, and receiver~\cite{ccipofframp} \\
Hyperlane~\cite{hyperlane} & Recipient-selected Interchain Security Module
& Relayer calls \texttt{Mailbox.process}~\cite{hyperlanemailbox}
& Domain and remote router~\cite{hyperlanerouter}; hook and ISM~\cite{hyperlane} \\
LayerZero~\cite{layerzero} & Decentralized Verifier Networks and Message Library
& Executor, Endpoint, and \texttt{lzReceive}
& EID, peer, Message Library, DVN, and Executor~\cite{layerzerooapp} \\
Relay~\cite{relayprotocol} & Oracle attestations of deposits and fills
& Solver fill and Hub settlement
& Depository, Hub, solver, and target action \\
Wormhole~\cite{wormholevaa} & Guardian-signed~\cite{wormholeguardians} Verified Action Approval
& Relayer submission and receiver execution
& Trusted emitter, receiver, and consistency level~\cite{wormholevaa} \\
XIR & Ordered verification evidence from cross-chain protocols
& XIR Gateway forwarding and application execution
& Existing protocol configurations, XIR Gateways, and XIR Adapters \\
\bottomrule
\end{tabular}
\end{table*}

Protocol interoperability allows different cross-chain protocols to carry one
message along a pair-to-pair path.  Here, \emph{pair-to-pair} refers to
communication from the source blockchain to the destination blockchain,
including any intermediate hops.  At an intermediate blockchain, an XIR Gateway receives a
message through one XIR Adapter and sends the same application operation
through another.  This step is a \emph{handoff}.  A same-protocol handoff uses
the same protocol for the incoming and outgoing connections.  A
\emph{protocol switch} uses different protocols.  XIR records the incoming
and outgoing protocols at each handoff.  Each connection retains its protocol
endpoint, point-to-point configuration, and verifier.

Figure~\ref{fig:xir-user-workflow} shows the interaction from the application
user's perspective.  The application submits one request with a destination
and a policy.  Direct delivery requires an $A\!\to\!C$ protocol configuration.
XIR instead selects existing $A\!\to\!B$ and $B\!\to\!C$ connections and returns
the execution status from $C$.  The two connections may use the same protocol
or different protocols.

\begin{figure*}[t]
\centering
\includegraphics[width=\textwidth]{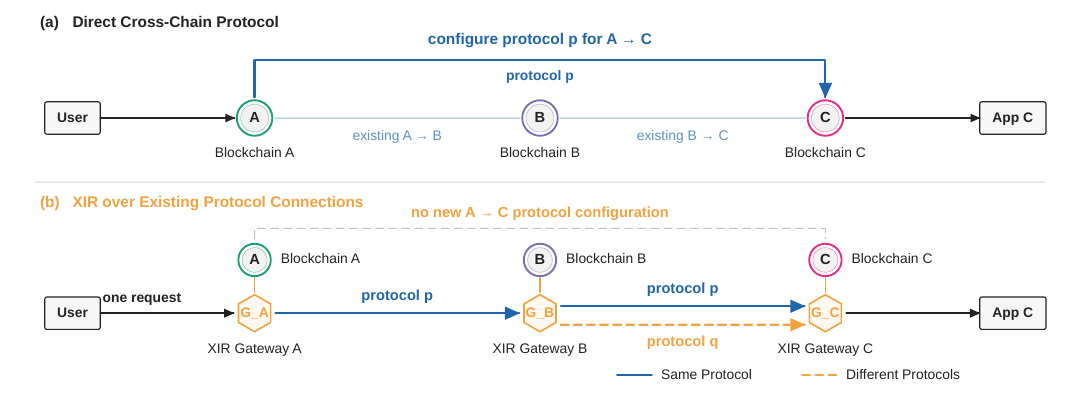}
\caption{\textbf{Cross-chain communication with and without XIR.} Direct delivery requires an $A\!\to\!C$ protocol configuration. XIR carries the request over existing $A\!\to\!B$ and $B\!\to\!C$ connections, using the same protocol ($p\!\to\!p$) or different protocols ($p\!\to\!q$).}
\label{fig:xir-user-workflow}
\end{figure*}

\subsection{Reachability across Blockchains}

Reachability is a standard property of directed graphs: a vertex $v$ is
reachable from $u$ if a directed path connects $u$ to $v$
~\cite{bangjensen2009digraphs}.  Let $G=(V,E,\lambda)$ be a protocol-labeled
directed multigraph.  Each vertex represents a blockchain, each edge
represents an observed direct protocol connection, and $\lambda(e)\in\mathcal P$
identifies the edge protocol.  Let $s(e)$ and $t(e)$ denote the source and
destination vertices of edge $e$.  For protocol $p$, define
\begin{equation}
E_p=\{(s(e),t(e))\mid e\in E\land\lambda(e)=p\}.
\label{eq:protocol-edge-relation}
\end{equation}

Let $\Delta_V=\{(v,v)\mid v\in V\}$ be the set of self pairs.
The direct, same-protocol, and cross-protocol reachability relations are
\begin{align}
\mathcal R_{\mathrm{direct}} &= \left(\bigcup_{p\in\mathcal P}E_p\right)\setminus\Delta_V,\nonumber\\*
\mathcal R_{\mathrm{same}} &= \left(\bigcup_{p\in\mathcal P}E_p^{+}\right)\setminus\Delta_V,\nonumber\\*
\mathcal R_{\mathrm{XIR}} &= \left(\bigcup_{p\in\mathcal P}E_p\right)^{+}\setminus\Delta_V,
\label{eq:reachability-relations}
\end{align}
where $(\cdot)^{+}$ denotes transitive closure.  The last relation describes composition over the cumulative observed graph,
with XIR Gateways and XIR Adapters at the intermediate blockchains
required by each path.  For any relation $\mathcal R$ containing only non-self
ordered pairs, the reachability ratio is
\begin{equation}
\rho(\mathcal R)=\frac{|\mathcal R|}{|V|(|V|-1)}.
\label{eq:reachability-ratio}
\end{equation}

For $|V|=286$, the denominator is $286\times285=81{,}510$.  The observed
relations give
\begin{align}
\rho(\mathcal R_{\mathrm{direct}})&=\frac{11{,}935}{81{,}510}=14.64\%,\nonumber\\
\rho(\mathcal R_{\mathrm{same}})&=56.66\%,\qquad
\rho(\mathcal R_{\mathrm{XIR}})=96.86\%.
\label{eq:observed-reachability-results}
\end{align}
Same-protocol composition adds 34,252 pairs beyond direct connections.
Protocol switches add another 32,766 pairs.
These values describe potential reachability over connections observed during
the ten-month window. Each edge records activity within that window, and
composition assumes XIR Gateway and XIR Adapter coverage along the path.
Temporal snapshots show how the result changes with the observation window:
monthly protocol-switching reachability ranges from 46.4\% to 65.7\%, and the
final rolling-90-day union reaches 74.2\%. The cumulative reference is 96.9\%.
All three use the same 286-blockchain population. Raising the minimum activity
per directed protocol edge from one to 1,000 distinct events reduces the
cumulative protocol-switching result from 78,953 to 18,913 pairs. These
analyses describe the dependence of reachable paths on observation duration
and edge activity.

For $N=|V|$ blockchains, direct connectivity requires one point-to-point
protocol configuration for every ordered pair, or $N(N-1)=O(N^2)$ configurations.
XIR uses one XIR Gateway deployment per blockchain, giving $N=O(N)$ deployments.  It also requires one
integration for each incident blockchain--protocol pair.  Let $V_p$ be the set
of blockchains incident to an edge labeled $p$.  Then
$I_{\mathrm{XIR}}=\sum_{p\in\mathcal P}|V_p|=O(N)$ integrations are required for a fixed protocol set. Applied to the
measured graph, this architecture requires 286 XIR Gateway placements and 493
blockchain--protocol integrations.  Section~\ref{sec:empirical-cost-analysis} reports these units
separately and compares deployment gas.

\subsection{Hop Count}
\label{sec:hop-count}

One hop is a directed cross-chain protocol connection between adjacent blockchains,
consistent with the terminology of multi-hop IBC~\cite{ibcmultihop}.  A path
\begin{equation}
P=(v_0,e_1,v_1,\ldots,e_h,v_h)
\label{eq:path-and-hops}
\end{equation}
has $h$ hops and $h-1$ intermediate blockchains.  Both
$A\xrightarrow{p}B\xrightarrow{p}C$ and
$A\xrightarrow{p}B\xrightarrow{q}C$ contain two hops.
A protocol switch performed by an XIR Gateway at an intermediate blockchain does not
add a hop; it is counted separately.

Hop count and protocol-switch count describe different path properties.
Writing $p_i=\lambda(e_i)$, the number of protocol switches is
\begin{equation}
c(P)=\sum_{i=2}^{h}\mathbf{1}[p_i\ne p_{i-1}].
\label{eq:protocol-change-count}
\end{equation}
The first example has $c(P)=0$, and the second has $c(P)=1$.  Experiments 2 and
3 use this distinction to separate the number of earlier protocol deliveries
from the number of protocol switches.

Among the 78,953 finite ordered pairs reached by cross-protocol composition in
the measured graph, the shortest-path hop distribution is
11,935 at one hop (15.12\%), 59,711 at two hops (75.63\%), 7,186 at three hops
(9.102\%), and 121 at four hops (0.1533\%).
The cumulative percentages are 15.12\%, 90.75\%, 99.85\%, and 100.0\%.
Nearest-rank $P_{50}$, $P_{95}$, and $P_{99}$ are two, three, and three hops,
respectively.  The directed diameter over these finite pairs is four hops and
is attained by 121 ordered pairs; the other 2,557 ordered pairs are
unreachable.  Figure~\ref{fig:xir-hop-distribution} shows the cumulative
distribution.  Paths of one to four hops span the complete cross-protocol
shortest-path range.

\begin{figure}[t]
\centering
\includegraphics[width=\columnwidth]{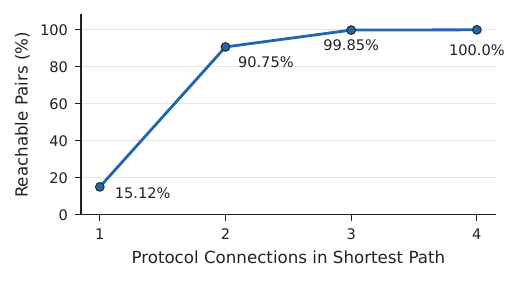}
\caption{\textbf{Shortest paths in the measured graph.} The curve shows the
cumulative fraction of the 78,953 reachable ordered pairs.  All shortest paths
use at most four protocol connections.}
\label{fig:xir-hop-distribution}
\end{figure}

Individual same-protocol closures can contain longer paths: CCIP has nine
reachable pairs whose shortest same-protocol path has five hops.  The
concentration at two hops motivates the three-blockchain topology in
Experiment 1, and the cross-protocol range motivates the path lengths in
Experiment 3.

Existing same-protocol systems also reuse connections through path composition.
IBC ICS-033~\cite{ibcmultihop} chains proofs across existing IBC connections,
and xRoute~\cite{xroute2026} adds policy-aware selection to IBC routes.
Axelar~\cite{axelargmp} exposes a common messaging interface across its
supported networks, while service-mesh work~\cite{kapsoulis2026mesh}
coordinates cross-chain transactions through mesh services. XIR composes direct protocol
connections across protocol domains and carries the accepted evidence into a
single destination decision.

\subsection{The Verification Problem}

Cross-chain protocols authenticate messages through different authorities and
evidence formats, including validator committees, light clients, oracle
networks, configurable verifier sets, and signed attestations
~\cite{zamyatin2021sok,notland2026sok}.  A one-hop destination observes the
evidence authenticated by its protocol endpoint.  A cross-protocol path adds an
intermediate handoff whose inbound verifier belongs to another protocol.

The destination must bind every receipt to the same application operation and
ordered path.  It must also bind each receipt to the protocol verification
result, the selected verifier configuration, and the next handoff.  The
binding covers the payload and replay identity.  Analyses of bridge failures
show the importance of endpoint
authentication, configuration integrity, and inherited trust assumptions
~\cite{lee2023bridgehacks,zhang2024bridgesecurity}.

Related work also addresses atomic asset exchange and transaction scheduling.
Chuchu~\cite{zhuo2026chuchu} coordinates cross-chain swap completion and
refunds through asset locking state transitions and linked hashlock groups.
Concordia~\cite{liu2026concordia} shares predicted account-access signals
among shards to guide transaction packing and reduce conflicts in a sharded
blockchain.

Systems that strengthen verification improve one protocol boundary.
zkBridge~\cite{xie2022zkbridge} proves remote consensus state, while
CAVer~\cite{guo2025caver} uses cross-chain-specific accumulators to reduce
message-proof size and destination verification overhead.  XIR records the verification
evidence accepted from each protocol, links the evidence in path order, and
authenticates the complete record at the destination.  This representation
supports cross-protocol multi-hop execution while preserving the verifier of
each protocol connection.

The task is to compose existing protocol connections while maintaining
\emph{verification continuity} and \emph{execution continuity}.  The former
binds the verifier sequence and its configuration.  The latter binds the
stable application operation, payload, path state, and replay identity.
Sections~\ref{sec:model} and~\ref{sec:design} introduce the XIR components and
protocol steps that realize these properties.

%% file: sections/3-system-model.tex
\section{System Model}
\label{sec:model}

XIR combines existing direct protocol connections while preserving one
application operation and its verified execution history.  The running example
is $A\xrightarrow{p}B\xrightarrow{q}C$, where protocols $p$ and $q$ have
active endpoints and point-to-point configurations, as in Hyperlane
routers~\cite{hyperlanerouter} and LayerZero OApps~\cite{layerzerooapp}.

\subsection{Entities, Roles, and State}

An application submits a cross-chain message, destination, and security requirement to a
source \emph{XIR Gateway}.  An XIR Gateway connects an application to XIR on
one blockchain.  The source XIR Gateway authenticates the caller and creates the
XIR operation.  An intermediate XIR Gateway verifies one protocol delivery and
starts the next.  The destination XIR Gateway verifies the complete path before
invoking the receiver.

The \emph{XIR Router} reads the \emph{XIR Registry} and returns an admissible
path.  The XIR Registry stores protocol endpoints, point-to-point configurations,
XIR Adapters, verification profiles, validity windows, security levels, and
approved incoming XIR Adapters.  An \emph{XIR Adapter} connects XIR to one
cross-chain protocol.  Its incoming side records an authenticated protocol
callback.  Its outgoing side verifies that record before sending the next
message.  Relayers submit messages, and a root signer certifies a finalized
source event.

The XIR Registry maintains versioned configuration.  XIR Adapters store accepted
evidence and approved prior verifiers.  XIR Gateways maintain application
nonces and consumed message identifiers.  Route selection does not change this
state.  Message acceptance uses the on-chain configuration and authenticated
XIR Adapter state.

Figure~\ref{fig:xir-internal-workflow} summarizes how these components execute
one request.  The source creates one application record, the XIR Router selects
configured protocol connections, each protocol verifies its own delivery, and
the destination verifies the ordered record before invoking the application.

\begin{figure*}[t]
\centering
\includegraphics[width=\textwidth]{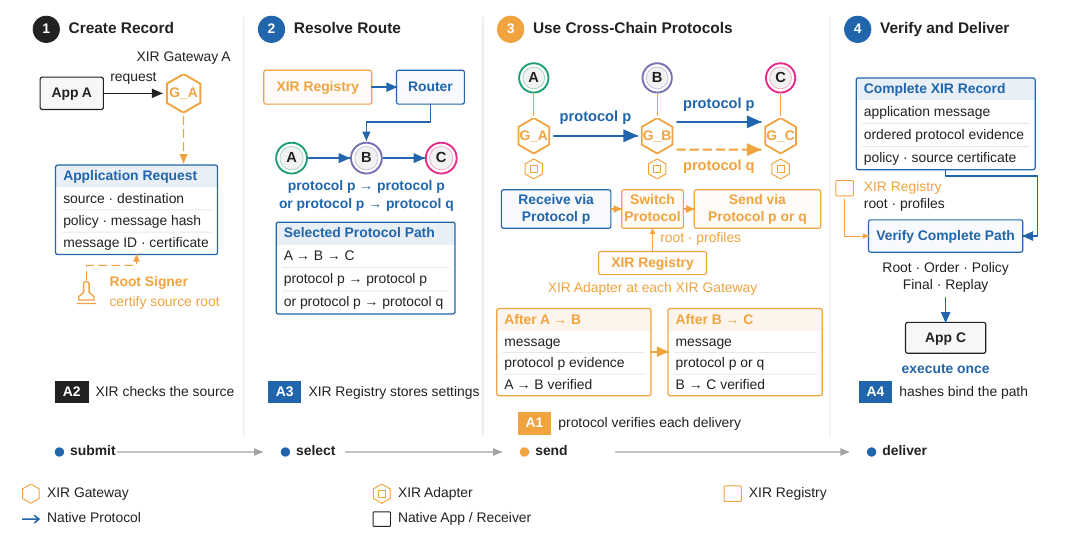}
\caption{\textbf{XIR workflow.} The XIR Router selects configured protocol connections. XIR Gateways and XIR Adapters record the verification result of each delivery. The destination verifies the ordered record before invoking the application.}
\label{fig:xir-internal-workflow}
\end{figure*}

\subsection{Records, Identifiers, and Path Execution}

XIR carries the following verifiable intermediate representation (V-IR):
\begin{equation}
X=\langle R,ctx,cert,[C_1,\ldots,C_h]\rangle ,
\label{eq:xir-envelope-overview}
\end{equation}
The application record $R$ identifies the operation, and $ctx$ carries its
security requirement and policy commitment. The certificate $cert$ authorizes the
source record. Each receipt $C_i$ identifies an authenticated delivery over
protocol connection $i$. The source also derives a root identifier $rid$ and
a message identifier $mid$: the former binds the authorized record, and the
latter identifies its destination execution. Section~\ref{sec:design} defines
these encodings.

A path is a directed sequence
$v_0\xrightarrow{p_1}v_1\cdots\xrightarrow{p_h}v_h$ in the XIR Registry view. Each
edge contributes one hop.  A protocol switch occurs when $p_i\ne p_{i+1}$.
The handoff at the intermediate XIR Gateway does not add a hop.  Verified evidence
from connection $i$ produces $C_i$ and extends a commitment to the path
history.  The XIR Gateway then uses an XIR Adapter to continue through the same
protocol or switch to another protocol.

\subsection{Adversary, Trust, and Required Properties}
\label{sec:model-assumptions}

An adversary controlling a submitter or the communication network may delay,
duplicate, reorder, or alter unauthenticated inputs; replay a completed
request; splice evidence from another execution; or present an expired
profile.  The destination accepts a message only when its application record
is authorized and every protocol delivery satisfies the selected policy.

The analysis uses four assumptions. \textbf{A1, protocol-delivery soundness:}
an authenticated protocol callback binds its endpoint, remote XIR Adapter, message,
and evidence identifier. \textbf{A2, XIR-component soundness:} XIR Gateways,
XIR Adapters, and the root signer execute the specified checks. \textbf{A3,
configuration authenticity:} the XIR Registry contains the intended roots,
profiles, XIR Adapters, endpoint configurations, validity windows, and approved
incoming XIR Adapters.
\textbf{A4, cryptographic binding:} canonical encodings are injective,
domain-separated hashes resist collisions, and root signatures resist forgery.

Under A1--A4, destination acceptance provides four properties.
\textbf{G1, source authorization:} the accepted record and context match the
certified source root. \textbf{G2, ordered evidence integrity:} the receipt
sequence is endpoint-continuous and bound to one final protocol delivery.
\textbf{G3, policy enforcement:} every receipt resolves to an active profile
whose security level meets $ctx$. \textbf{G4, at-most-once delivery:} one
message identifier commits at most one destination effect.
Section~\ref{sec:analysis} establishes them under A1--A4.

Combining protocol connections raises the two challenges introduced in
Section~\ref{sec:intro}.  \emph{Verification continuity} requires separate
protocol callbacks to form one ordered history that the destination can
verify.  The callback for the last connection authenticates the final delivery.
XIR carries earlier verification results forward by linking one receipt to
each accepted delivery and binding the complete sequence to that final message.  \emph{Execution continuity}
requires one application operation to retain its identity across different
protocol formats and callbacks.  XIR keeps $R$, $ctx$, $rid$, and $mid$ stable
while XIR Gateways and XIR Adapters perform each handoff.

\subsection{Pair-to-Pair Workflow}

Algorithm~\ref{alg:xir-route-execute} details the workflow in
Figure~\ref{fig:xir-internal-workflow}. The inputs $v_{\mathrm{src}}$ and $v_{\mathrm{dst}}$ identify the
source and destination blockchains, $a_{dst}$ identifies the destination
application, $m$ is the message payload, and $\pi$ specifies the security
policy. The source XIR Gateway authenticates the caller and creates $R$, storing
$a_{dst}$ as $R.destinationApp$.

The XIR Router returns a path $P=(v_0,e_1,\ldots,e_h,v_h)$ and its verification
profiles $\Phi=(\Phi_1,\ldots,\Phi_h)$. Here, $G_i$ is the registered XIR Gateway
on blockchain $v_i$. Let $T_i=[C_1,\ldots,C_i]$ denote the history after $i$
deliveries. The algorithm maintains this list in $T$ and its commitment $q_i$
in $prefix$, starting with $T_0=[\ ]$ and $q_0$. Each iteration represents an
asynchronous protocol delivery and its authenticated reception. Route selection
uses one XIR Registry view; each on-chain step checks the configuration available
when it executes. A failed assertion aborts the current operation. The final
consume-and-invoke step is atomic within the destination transaction.

\floatname{algorithm}{Algorithm}
\begin{algorithm}[t]
\footnotesize
\caption{XIR path composition and delivery}
\label{alg:xir-route-execute}
\begin{algorithmic}[1]
\Require source blockchain $v_{\mathrm{src}}$, destination blockchain $v_{\mathrm{dst}}$,
\Statex \hspace{\algorithmicindent}destination application $a_{dst}$, payload $m$, policy $\pi$
\Ensure execution result on successful completion
\State $(R,ctx,rid,mid)\gets\Call{CreateRecord}{v_{\mathrm{src}},v_{\mathrm{dst}},a_{dst},m,\pi}$
\State $cert\gets\Call{CertifyFinalizedRoot}{rid}$
\State $(P,\Phi)\gets\Call{ResolvePath}{v_{\mathrm{src}},v_{\mathrm{dst}},ctx}$
\State \textbf{assert} $P\ne\varnothing$
\State $h\gets |P.edges|$
\State $(G_0,\ldots,G_h)\gets\Call{ResolveGateways}{P}$
\State $T\gets[\ ]$
\State $prefix\gets H(\mathtt{XIR\_ROOT\_V1}\parallel rid)$
\For{$i\gets1$ to $h$}
  \State $e_i\gets P.edges[i]$
  \State \textbf{assert} \Call{ValidProfile}{$\Phi_i,G_{i-1},G_i,ctx$}
  \State $body\gets\Call{EncodeMessage}{R,ctx,cert,T,prefix}$
  \State $evidence_i\gets\Call{ExecuteProtocolConnection}{e_i,body}$
  \Statex \hspace{\algorithmicindent}$\triangleright$ At the receiving XIR Adapter on $G_i$'s blockchain
  \State $accepted_i\gets\Call{Authenticate}{e_i,evidence_i}$
  \State \textbf{assert} $accepted_i$
  \State $C_i\gets\Call{MakeReceipt}{e_i,evidence_i,prefix,R,ctx}$
  \State \textbf{assert} $C_i.srcGateway=G_{i-1}\land C_i.dstGateway=G_i$
  \State \Call{RecordAcceptedEvidence}{$C_i,accepted_i$}
  \State $T\gets T\mathbin{\|}[C_i]$
  \State $prefix\gets\Call{ExtendPrefix}{prefix,C_i}$
  \If{$i<h$}
    \State \textbf{assert} \Call{ApprovedIngress}{$T,P.edges[i+1],\Phi_{i+1}$}
  \EndIf
\EndFor
\Statex $\triangleright$ At the destination XIR Gateway $G_h$
\State \textbf{assert} \Call{VerifySource}{$R,ctx,rid,mid,cert$}
\State \textbf{assert} \Call{VerifyOrderedReceipts}{$T,prefix,R,ctx$}
\State \textbf{assert} \Call{VerifyFinalBundle}{$T,G_h$}
\State \textbf{assert} $\neg consumed[mid]$
\State \Return \textsc{AtomicConsumeAndInvoke}(
\Statex \hspace{3em}$mid,R.destinationApp,m$)
\end{algorithmic}
\end{algorithm}

The ordered verification records and final commitment provide verification
continuity.  Stable operation identifiers, approved incoming XIR Adapters, and XIR
Adapters provide execution continuity.  Section~\ref{sec:design} defines these
operations in execution order.

%% file: sections/4-xir-design.tex
\section{XIR Design}
\label{sec:design}

XIR binds each protocol delivery to one application operation and its
source-certified XIR Registry version. XIR Gateways check profile activity and
approved XIR Adapters against the configuration available when they execute.  This section follows the execution order:
source initialization, protocol sends, XIR Gateway handoffs, and destination
execution.

Figure~\ref{fig:xir-protocol-design} shows the complete lifecycle.  The upper
part shows the protocol path.  The middle part shows how the application record
gains one verified delivery record at each blockchain.  The lower part shows
the checks performed by the destination.

\begin{figure*}[t]
\centering
\includegraphics[width=\textwidth]{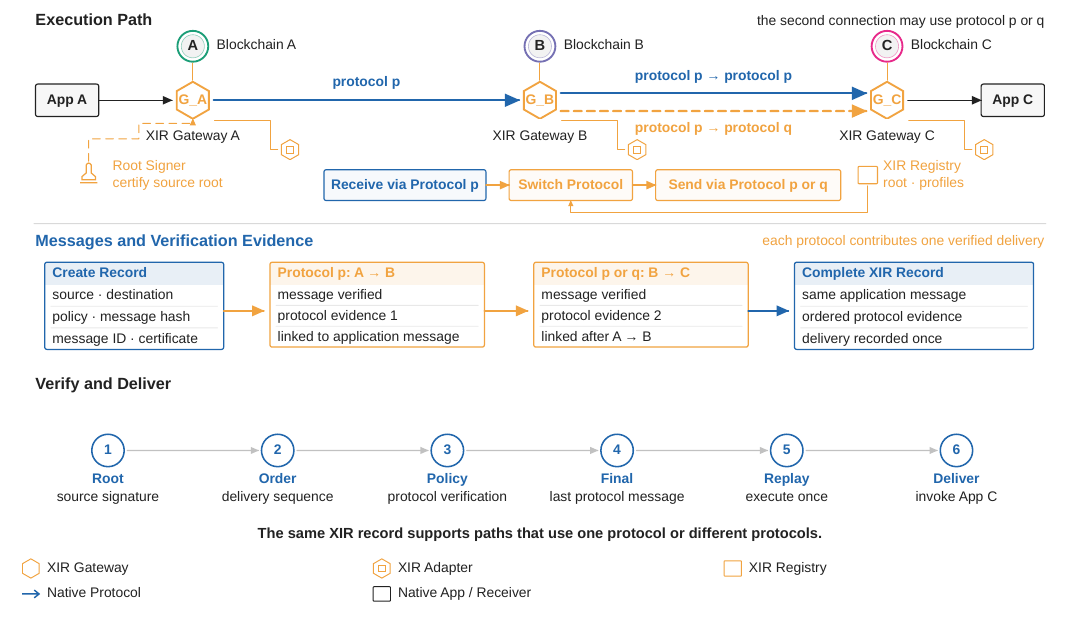}
\caption{\textbf{Execution and verification in XIR.} The same application record crosses each protocol connection. An XIR Gateway records the verification evidence produced by each protocol. The destination checks the source, delivery order, verification policy, final protocol message, and replay state before executing the application.}
\label{fig:xir-protocol-design}
\end{figure*}

\subsection{Route and Representation Initialization}

The source XIR Gateway authenticates the application and creates
\begin{equation}
\begin{aligned}
R=\langle{}&sourceGateway,sourceApp,destinationApp,\\
            &nonce,payloadHash\rangle,\\
ctx=\langle{}&requiredSecurity,policyHash\rangle .
\end{aligned}
\label{eq:xir-record}
\end{equation}
XIR Gateway and application identifiers are typed pairs $\langle kind,value\rangle$.
The XIR Gateway increments a per-caller nonce; the payload determines
\texttt{payloadHash}.

Two stable identifiers connect source authorization to destination execution.
The root identifier binds the record, context, and XIR Registry version; the
message identifier binds that root to the destination application.
Domain-separated hashes distinguish these uses:
\begin{align}
d_R&=H(\mathtt{XIR\_RECORD\_V1}\parallel\mathsf{Encode}(R)),\nonumber\\*
d_{ctx}&=H(\mathtt{XIR\_CONTEXT\_V1}\parallel\mathsf{Encode}(ctx)),\nonumber\\*
rid&=H(\mathtt{XIR\_RID\_V1}\parallel sourceGateway\nonumber\\*
   &\qquad\parallel d_R\parallel d_{ctx}\parallel registryVersion),\nonumber\\*
mid&=H(\mathtt{XIR\_MID\_V1}\parallel rid\parallel destinationApp).
\label{eq:xir-mid}
\end{align}
The root signer validates the finalized \texttt{RootCreated} event and signs
$rid$; the certificate $cert$ carries the corresponding XIR Registry version.

For each direct protocol connection, the XIR Registry resolves a verification profile
\begin{equation}
\begin{aligned}
\Phi_i=\langle{}&srcHash,dstHash,adapter,securityLevel,\\
                 &validAfter,validUntil,enabled\rangle .
\end{aligned}
\label{eq:xir-descriptor}
\end{equation}
The XIR Registry administrator assigns security levels on the policy's ordered scale.
A profile is eligible when its endpoint hashes match the edge, it is enabled
within its validity window, and
\begin{equation}
\Phi_i.securityLevel\ge ctx.requiredSecurity.
\label{eq:effective-policy}
\end{equation}
The XIR Router selects eligible protocol connections and XIR Adapters in path order; XIR Gateways
enforce the corresponding XIR Registry entries during execution.

\subsection{Cross-Chain Protocol Send and Evidence Acceptance}

Before protocol connection $i$, the outgoing XIR Adapter encodes the V-IR state for protocol
$p_i$.  The protocol endpoint sends this state over the configured direct
connection.  The incoming XIR Adapter receives the message through an
authenticated protocol callback, such as a Hyperlane Mailbox
callback~\cite{hyperlanemailbox} or a LayerZero endpoint
callback~\cite{layerzerooapp}.

The incoming XIR Adapter records enough information to associate the accepted
delivery with its endpoints, verifier profile, and application operation.
This information forms one hop receipt:
\begin{equation}
\begin{aligned}
C_i=\langle{}&srcGateway,dstGateway,profileHash,\\
              &evidenceHash,transitionHash,priorPrefix\rangle .
\end{aligned}
\label{eq:protocol-delivery-receipt}
\end{equation}
The protocol message digest or GUID supplies \texttt{evidenceHash};
\texttt{profileHash} identifies $\Phi_i$; and \texttt{transitionHash} binds
$d_R$, $d_{ctx}$, and adjacent XIR Gateways. The XIR Adapter records
\begin{equation}
\begin{aligned}
\kappa_i=H(&C_i.profileHash,C_i.evidenceHash,\\
      &C_i.transitionHash)
\end{aligned}
\label{eq:accepted-evidence}
\end{equation}
in \texttt{acceptedEvidence}.  This entry identifies the authenticated
protocol callback used by $C_i$.

\subsection{XIR Gateway Handoff and XIR Adapter Conversion}

At an intermediate XIR Gateway, the incoming XIR Adapter passes the accepted
receipt and accumulated V-IR state to the outgoing XIR Adapter.  For
$A\xrightarrow{p}B\xrightarrow{p}C$, both XIR Adapters integrate $p$. For
$A\xrightarrow{p}B\xrightarrow{q}C$, the inbound XIR Adapter integrates $p$ and
the outgoing XIR Adapter integrates $q$.  In both cases, $R$, $ctx$, $rid$, and
$mid$ remain unchanged.  The outgoing XIR Adapter changes only the protocol
encoding and endpoint used for the next connection.

An outgoing XIR Adapter must establish which incoming XIR Adapter accepted the
preceding evidence. The XIR Registry records the XIR Adapter that verifies each
profile, and the outgoing XIR Adapter enforces this binding through its approved
verifier mapping. The outgoing XIR Adapter receives a \texttt{ForwardRequest}
containing the V-IR state and its verifier, profile, evidence, and transition
arrays. For each preceding receipt $C_i$, the incoming XIR Adapter $a_{in}$ must satisfy
\begin{equation}
\begin{aligned}
&\mathsf{approvedPriorVerifiers}[C_i.profileHash]\\
&\qquad=a_{in},\\
&a_{in}.\mathsf{verify}(C_i.profileHash,\\
&\qquad C_i.evidenceHash,C_i.transitionHash)\\
&\qquad=\mathsf{true}.
\end{aligned}
\label{eq:approved-ingress}
\end{equation}
The next protocol send can extend the operation only from an incoming XIR
Adapter that is approved by the XIR Registry and recorded the preceding evidence.

The receipt chain binds each new delivery to the history already accepted.
Its initial commitment starts at
$q_0=H(\mathtt{XIR\_ROOT\_V1}\parallel rid)$ and advances as
\begin{equation}
\begin{aligned}
q_i=H(&\mathtt{XIR\_PREFIX\_V1}\parallel q_{i-1}\parallel\\
      &H(\mathtt{XIR\_HOP\_V1}\parallel\mathsf{Encode}(C_i))).
\end{aligned}
\label{eq:xir-prefix}
\end{equation}
The \texttt{priorPrefix} field in $C_i$ must equal $q_{i-1}$, and the
destination XIR Gateway in $C_i$ must equal the source XIR Gateway in $C_{i+1}$.
Every handoff preserves two invariants: the application identity remains
stable, and the accepted evidence extends exactly one ordered history.

The final delivery must authenticate the same receipt sequence that the
destination checks. This binding prevents accepted evidence from separate
executions from being combined into a new history. On the final hop, the
destination XIR Adapter authenticates the ordered evidence tuples through
\begin{equation}
b_h=\mathsf{BundleCommit}(C_1,\ldots,C_h).
\label{eq:xir-bundle}
\end{equation}
Here, $\mathsf{BundleCommit}$ binds the list length and each indexed
$(profileHash,evidenceHash,transitionHash)$ tuple in receipt order.
The prefix commits to receipt order, while $b_h$ binds that order to the final
protocol delivery.

\subsection{Destination Verification, Replay Protection, and Execution}

The destination XIR Gateway first recomputes the payload hash, $rid$, and $mid$ and
verifies the source certificate.  It scans the receipts in path order and, for
each $C_i$, checks endpoint continuity, \texttt{priorPrefix}, the transition
binding, profile activity, Eq.~\eqref{eq:effective-policy}, and the exact
accepted-evidence tuple in XIR Adapter state.  It then requires the final XIR Gateway
identity and the locally recomputed $b_h$ to match the values authenticated by
the destination XIR Adapter.

After these checks succeed, the XIR Gateway rejects an existing \texttt{consumed}
entry for $mid$, marks $mid$ as consumed, and invokes \texttt{xirReceive}.  If
the receiver call fails, the XIR Gateway reverts the transaction; EVM reversion
rolls back the associated state changes~\cite{eip140}.  A committed delivery
retains the consumed marker and one destination application effect.
These operations establish G1--G4 in the order defined in
Section~\ref{sec:model-assumptions}.

%% file: sections/5-analysis.tex
\section{Analysis}
\label{sec:analysis}

This section proves the destination properties defined in Section~\ref{sec:model},
analyzes how path information grows, and evaluates configuration and deployment
cost.

\subsection{Correctness under A1--A4}

An accepted delivery is a transaction in which the destination XIR Gateway
completes every check in Section~\ref{sec:design}, consumes $mid$, and commits
the receiver invocation.

\begin{lemma}[Prefix and bundle invariant]
\label{lem:trace-invariant}
After the destination verifies receipt $C_i$, the prefix commits exactly to
$[C_1,\ldots,C_i]$, adjacent receipt endpoints are continuous, and the bundle
accumulator commits to the evidence tuples of the same length.
\end{lemma}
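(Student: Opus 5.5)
We argue by induction on $i$, following the destination's in-order scan of the receipts in Section~\ref{sec:design}. We strengthen the statement into a loop invariant $\mathcal I(i)$ with three parts. First, the running value checked against \texttt{priorPrefix} equals $q_i$ as defined by Eq.~\eqref{eq:xir-prefix}. Second, for every $1\le j<i$, $C_j.dstGateway=C_{j+1}.srcGateway$. Third, the running bundle state is an injective function of the length $i$ and the indexed tuples $(C_j.profileHash,C_j.evidenceHash,C_j.transitionHash)_{j\le i}$. The base case is $\mathcal I(0)$: the destination recomputes $q_0=H(\mathtt{XIR\_ROOT\_V1}\parallel rid)$ from the $rid$ it has just recomputed, there are no adjacency constraints, and the bundle state is the empty list of length zero.

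For the step from $\mathcal I(i-1)$ to $\mathcal I(i)$, the destination checks $C_i.priorPrefix=q_{i-1}$ and then computes $q_i$ by Eq.~\eqref{eq:xir-prefix}. To show that $q_i$ commits \emph{exactly} to $[C_1,\ldots,C_i]$, we need the map $[C_1,\ldots,C_i]\mapsto q_i$ to be collision-resistant. Suppose two sequences give equal $q_i$. By A4, collision resistance together with the distinct domain tags $\mathtt{XIR\_PREFIX\_V1}$, $\mathtt{XIR\_HOP\_V1}$ and $\mathtt{XIR\_ROOT\_V1}$ gives equal inner arguments. Injectivity of $\mathsf{Encode}$ then gives equal $C_i$ and equal $q_{i-1}$. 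The inductive hypothesis applied to $q_{i-1}$ then yields equal prefixes, and unrolling down to $q_0$ also forces equal $rid$. Endpoint continuity for the new index follows directly from the explicit check $C_{i-1}.dstGateway=C_i.srcGateway$, combined with the continuity already established for smaller indices. For the bundle, appending the $i$-th indexed tuple to a state that injectively encodes length $i-1$ gives a state that injectively encodes length $i$. This holds because $\mathsf{BundleCommit}$ binds the list length and indices, so under A4 no two sequences of different lengths or with permuted tuples collide.

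The main obstacle is the phrase ``exactly'': this is a computational statement, not a set-theoretic one. The plan is to state the invariant as ``any two receipt lists that produce the same $q_i$ or the same $b_i$ are equal, except with negligible probability'', and to carry out the unrolling argument above carefully. The delicate points are the domain separation between the hop hash and the prefix hash, which prevents a $C_i$ encoding from being mistaken for a prefix, and the need for the \texttt{priorPrefix} field to match the recomputed value rather than an adversary-supplied one. A second, smaller subtlety is that the lemma as stated does not assert that the prefix and the bundle commit to the \emph{same} list. It only asserts that both have length $i$ and are indexed by the same receipts. We will state this equality explicitly, since both are computed from the one list $T$ scanned by the destination, and we will defer the comparison with the authenticated $b_h$ to the proof of G2.
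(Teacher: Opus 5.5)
Your proposal is correct and takes the same route as the paper. The paper also inducts on $i$ from $q_0=H(\mathtt{XIR\_ROOT\_V1}\parallel rid)$ with empty receipt and evidence lists, and each step uses the \texttt{priorPrefix} and endpoint-continuity checks plus collision resistance under A4 to extend the prefix and the bundle by the same receipt and evidence tuple. Your explicit unrolling through domain separation and injective $\mathsf{Encode}$ spells out what the paper compresses into ``collision resistance preserves their contents and order.'' The paper also places the profile, transition-hash, accepted-evidence, and approved-ingress checks in the inductive step; the lemma's stated conclusions do not need them, but the proof of G2 relies on them when it reads the lemma as yielding authenticated deliveries.
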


\noindent\emph{Proof.}
For $i=0$, $q_0$ commits to the certified $rid$, and both the receipt list and
the evidence list are empty.  Assume the invariant holds after $C_{i-1}$.
Verification of $C_i$ checks $C_i.priorPrefix=q_{i-1}$, the source and
destination XIR Gateways, the registered profile and XIR Adapter, the transition hash,
and the accepted evidence tuple.  At a handoff, the XIR Adapter approval check
binds that tuple to the incoming XIR Adapter.  Equations~\eqref{eq:xir-prefix}
and~\eqref{eq:xir-bundle} then add the same receipt and evidence tuple to the
two commitments.  Collision resistance preserves their contents and order,
so the invariant holds after $C_i$.  Induction gives the result for all $h$
receipts.

\begin{theorem}[Conditional XIR delivery]
\label{thm:end-to-end}
If an XIR destination commits delivery under A1--A4, then G1 source
authorization, G2 ordered evidence integrity, G3 per-hop policy enforcement,
and G4 at-most-once delivery hold.
\end{theorem}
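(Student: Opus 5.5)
The proof takes the four properties in turn. Each one traces back to a specific destination check from Section~\ref{sec:design} that an accepted delivery must have passed, and each check is turned into a guarantee using A1--A4. Throughout, ``accepted'' means the destination XIR Gateway ran every assertion of Algorithm~\ref{alg:xir-route-execute} after the loop without aborting. By A2, that gateway actually performed those checks on the inputs it received.

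\textbf{G1.} The destination recomputes the payload hash, $d_R$, $d_{ctx}$, $rid$ and $mid$ from the received $R$, $ctx$ and $m$ using Eq.~\eqref{eq:xir-mid}. It then checks $cert$ against the root key stored in the XIR Registry; A3 makes that key the intended one. Unforgeability (A4) means a valid $cert$ on $rid$ must have come from the root signer. By A2 the signer signs only the $rid$ of a finalized \texttt{RootCreated} event. Injective encodings and collision-resistant, domain-separated hashes (A4) then force the accepted $(R,ctx,registryVersion)$ to equal the certified ones. Hence the record and context match the certified source root, and $mid$ is bound to $R.destinationApp$.

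\textbf{G2.} This follows from Lemma~\ref{lem:trace-invariant} applied with $i=h$. The prefix $q_h$ commits exactly to $[C_1,\ldots,C_h]$ with continuous adjacent endpoints. The chain is anchored at $q_0=H(\mathtt{XIR\_ROOT\_V1}\parallel rid)$, so it belongs to the $rid$ certified in G1. The local bundle commitment $b_h$ covers the same ordered evidence tuples. The destination XIR Adapter authenticated some $b_h$ through the final protocol callback, and by A1 that callback binds the endpoint, remote adapter, message and evidence. Equality of the recomputed and authenticated $b_h$, together with collision resistance, therefore ties the whole receipt sequence to one final delivery. Each $\kappa_i$ found in \texttt{acceptedEvidence}, combined with the approved-ingress condition Eq.~\eqref{eq:approved-ingress} under A2 and A3, shows that every $C_i$ corresponds to a real authenticated callback. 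This is what rules out splicing receipts from another execution: a spliced receipt would need a matching $\kappa_i$, \texttt{priorPrefix} and transitionHash, which requires a hash collision.

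\textbf{G3.} This is a per-receipt check. For each $C_i$ the destination resolves $C_i.profileHash$ in the Registry and checks that the profile is enabled and within its validity window, and that Eq.~\eqref{eq:effective-policy} holds against the $ctx$ authenticated in G1. A3 ensures the resolved profile is the intended one.

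\textbf{G4.} The checks on \texttt{consumed}$[mid]$ and the marking of $mid$ as consumed happen inside one atomic transaction together with \texttt{xirReceive}. EVM reversion (A2, and the semantics cited at \cite{eip140}) means every committed effect comes with a committed consumed marker. Any later transaction carrying the same $mid$ fails the $\neg consumed[mid]$ assertion.

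The main obstacle is G2, specifically the cross-execution splicing argument. It is the only step that combines three mechanisms: the prefix chain, the accepted-evidence tuples and approved ingress, and the final bundle authenticated under A1. The argument needs care to show that a valid-looking history from a different $rid$ cannot survive them, and I would present it as a reduction to a hash collision or a violation of A1.
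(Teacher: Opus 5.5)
Your proposal is correct and follows the paper's proof essentially step for step. G1 comes from recomputing $d_R$, $d_{ctx}$, $rid$, $mid$ and validating the registered root signer under A2--A4; G2 from Lemma~\ref{lem:trace-invariant} plus the final adapter and bundle check under A1--A4; G3 from the per-receipt profile checks under A3; and G4 from the atomic consume-and-invoke step with EVM reversion under A2.

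You are more explicit than the paper about which assumption discharges each step. One small refinement for your splicing reduction: \texttt{priorPrefix} is not part of $\kappa_i$ and anyone can recompute it from the presented receipts. The collision (or A1-violation) barrier therefore comes from the \texttt{transitionHash} inside $\kappa_i$ and from the authenticated $b_h$, not from the prefix field.
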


\noindent\emph{Proof.}
Root verification recomputes $d_R$, $d_{ctx}$, $rid$, and $mid$ from the
accepted record and validates the registered source signer.  Under A2--A4,
these checks establish G1.  By Lemma~\ref{lem:trace-invariant}, the prefix and
bundle commitments contain one endpoint-continuous sequence of authenticated
protocol deliveries.  The destination also verifies the final XIR Adapter and
bundle, establishing G2 under A1--A4.  Each receipt resolves to an enabled
profile in its validity window, names the receipt endpoints, and satisfies
Eq.~\eqref{eq:effective-policy}; these checks establish G3 under A3.  Finally,
the XIR Gateway rejects an existing \texttt{consumed[mid]} entry, sets the entry,
and invokes the receiver in one transaction.  A failed invocation reverts the
state update under EVM semantics~\cite{eip140}; after a committed invocation,
the consumed entry rejects every later use of $mid$.  These steps establish G4
under A2.

A1 defines accepted protocol evidence, A2 covers XIR Gateway and XIR Adapter execution,
A3 authenticates profiles and ingress, and A4 binds identifiers, prefixes,
signatures, and bundles. Section~\ref{sec:experiment-execution} exercises 780
malformed or replayed executions against these checks.

\subsection{Cost Analysis}
\label{sec:execution-analysis}
\label{sec:empirical-cost-analysis}

\subsubsection{Path Information and Verification Work}

Let $h$ be the number of cross-chain protocol connections in a path $P$,
$s=c(P)$ the number of protocol switches, and $k$ the number of earlier
deliveries recorded at one handoff.  By definition, $0\le s\le h-1$.
A \texttt{ForwardRequest} carries variable-length arrays of fixed-size entries
for the prior verifier, profile, evidence, and transition values.
Let $B_0$ be the fixed message size in bytes and $B_C$ the bytes per prior
receipt. Verification work $W$ assumes a fixed amount of work per receipt.
The information and verification work at one handoff are
\begin{equation}
\begin{aligned}
B_{\mathrm{handoff}}(k)&=B_0+kB_C=\Theta(k),\\
W_{\mathrm{handoff}}(k)&=\Theta(k).
\end{aligned}
\label{eq:handoff-growth}
\end{equation}
The final receipt list $T_h$ contains $h$ receipts, and the destination scans them once:
\begin{equation}
|T_h|=h,\qquad W_{\mathrm{destination}}(h)=\Theta(h).
\label{eq:destination-growth}
\end{equation}

The current encoder retransmits the complete path history on every later
connection.  Total XIR information carried across all connections is
\begin{equation}
B_{\mathrm{route}}(h)=
\sum_{i=1}^{h}\bigl(B_0+(i-1)B_C\bigr)=O(h^2).
\label{eq:route-growth}
\end{equation}
Experiment 2 measures the local cost of one more earlier delivery at a fixed
final handoff.  Experiment 3 measures the contribution of a protocol switch to
the complete path while controlling for $h$ and the starting protocol.
Equations~\eqref{eq:destination-growth} and~\eqref{eq:route-growth} separate
destination verification from repeated transmission of the path history.

\subsubsection{Reachability Provisioning}

Let $\mathcal R^*$ be a target set of reachable ordered blockchain pairs.  A
direct-connectivity baseline creates one point-to-point protocol configuration
for each pair in $\mathcal R^*$.  Full directed coverage over $N$ blockchains
requires
\begin{equation}
D_{\mathrm{pair}}=N(N-1)=O(N^2).
\label{eq:pair-provisioning}
\end{equation}
For $N=286$, this expression gives 81,510 point-to-point protocol configurations.
The cross-protocol closure in Eq.~\eqref{eq:observed-reachability-results}
contains 78,953 ordered pairs (96.86\% of full directed coverage). Matching
that target with direct connections requires 78,953 configurations.  Of these,
67,018 establish pairs that are not directly connected in the measured graph.
XIR forms these paths from the existing connections and adds no direct
point-to-point configuration between the newly reachable blockchain pairs.

The XIR architecture has two separate deployment units.  It places one XIR
Gateway on each participating blockchain and adds one integration for every
protocol attached to that blockchain:
\begin{equation}
G_{\mathrm{XIR}}=N,\qquad
I_{\mathrm{XIR}}=\sum_{p\in\mathcal P}|V_p|.
\label{eq:xir-provisioning}
\end{equation}
For a fixed protocol set, $I_{\mathrm{XIR}}\le|\mathcal P|N$.  XIR Gateway
deployments and protocol integrations therefore each grow as $O(N)$.  In the
measured graph, $G_{\mathrm{XIR}}=286$ and $I_{\mathrm{XIR}}=493$.  These
components combine 15,965 existing protocol-labeled directed connections into
78,953 reachable ordered pairs.  Every path uses XIR Gateways, XIR
Adapters, accepted application paths, and correctly configured protocol
connections.

\subsubsection{Measured Deployment Units}

Deployment gas provides one common cost unit for the measured contracts.
Figure~\ref{fig:xir-analysis-overview} organizes the deployment counts and
measured per-blockchain costs. The calculations below distinguish XIR core
contracts, protocol stacks, XIR Adapters, and endpoint configuration.  One \texttt{XIRGateway} and one
\texttt{XIRRegistry} consume 2,134,905 gas.  Across 286 blockchains, the XIR
core total is 610,582,830 gas. The measured Hyperlane~\cite{hyperlanemailbox}
and LayerZero~\cite{layerzero} protocol stacks consume 5,529,809 and 27,881,442 gas per blockchain.  Their observed
coverage gaps are 118 and 144 blockchains, giving modeled expansion totals of
652,517,462 and 4,014,927,648 gas.  These extrapolations describe the core and protocol-stack deployment
components under their respective coverage assumptions. A complete
reachability deployment also includes XIR Adapters and endpoint configuration.

\begin{figure*}[t]
\centering
\includegraphics[width=\textwidth]{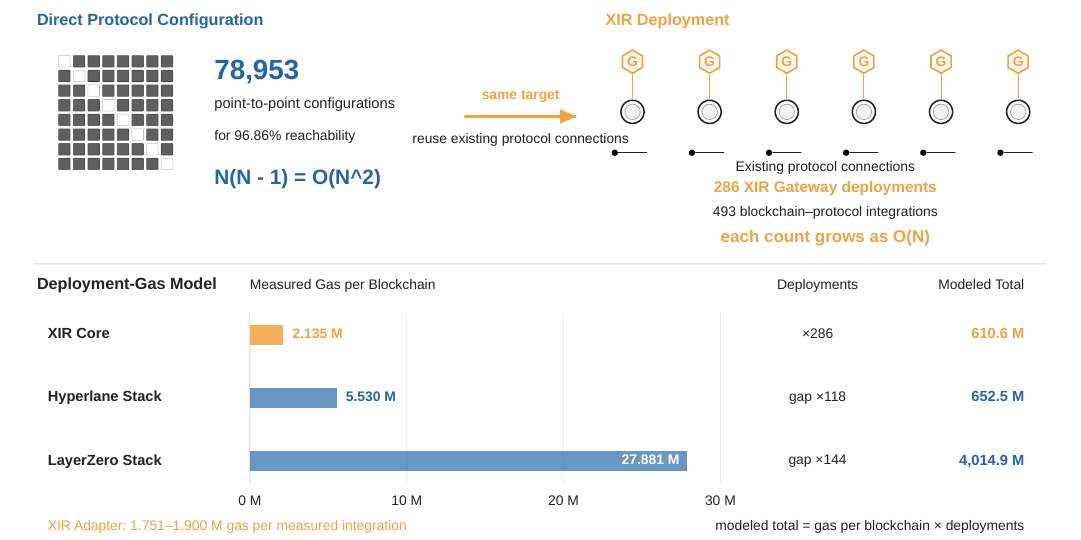}
\caption{\textbf{Configuration and deployment cost.} The upper part reports point-to-point protocol configurations, XIR Gateway deployments, and protocol integrations as separate units. The lower part compares deployment cost in gas by multiplying each measured per-blockchain cost by its deployment count.}
\label{fig:xir-analysis-overview}
\end{figure*}

Measured XIR Adapters for Hyperlane and LayerZero add 1,750,897 and 1,899,585 gas,
respectively, outside the 286-blockchain XIR core total.  In the configuration
benchmark, 100 first writes to unique remote identifiers yield median costs of 95,894 gas for
\texttt{Router.enrollRemoteRouter}~\cite{hyperlanerouter} and 47,387 gas for
\texttt{OAppCore.setPeer}~\cite{layerzerooapp}. The local path-isolation matrix records 125,583,834 gas across 79
contract deployments and 137,989,270 gas across 264 deployment and
configuration transactions.  The measurements use three- and five-blockchain
deployments with protocol-specific XIR Adapters; the 286-blockchain totals apply these
units to Eq.~\eqref{eq:xir-provisioning}.

Section~\ref{sec:implementation} next measures path-dependent execution cost.

%% file: sections/6-implementation-evaluation.tex
\section{Implementation and Evaluation}
\label{sec:implementation}

The evaluation addresses three questions.  \textbf{Q1:} Can XIR deliver
messages over same-protocol and cross-protocol paths while enforcing its
verification checks?  \textbf{Q2:} How do earlier same-protocol deliveries
change the cost of one fixed final protocol switch?  \textbf{Q3:} How do path
length and the number of protocol switches affect total execution cost?

\subsection{Prototype and Experimental Setup}

The prototype includes Solidity XIR Gateways, an XIR Registry,
XIR Adapters for Hyperlane~\cite{hyperlanemailbox} and LayerZero~\cite{layerzerooapp}, destination applications, and a stateful Python
orchestrator.  The implementation computes the same hashes in Solidity, Python, and Rust and
connects to Hyperlane Mailbox callbacks~\cite{hyperlanemailbox} and
LayerZero endpoint callbacks~\cite{layerzerooapp}.  The artifact
contains contract sources, deployment configurations, protocol-message records,
generated aggregates, and analysis scripts.

The controlled environment has three or five blockchains running
Besu QBFT~\cite{besuqbft}, with four validators each. Every blockchain runs XIR, Hyperlane, and
LayerZero stacks. The Hyperlane stack contains a Mailbox, MerkleTreeHook,
MessageIdMultisigIsm, validator, and relayer; the LayerZero stack contains
EndpointV2, SendUln302, ReceiveUln302, DVN, Executor, fee library, price feed,
and treasury. The public-testnet executions use OP Sepolia, Arbitrum Sepolia,
Base Sepolia, and Solana Devnet.
Figure~\ref{fig:experiment-environment} maps the experiments to these
environments.

\begin{figure*}[t]
\centering
\makebox[\textwidth][c]{\includegraphics[width=1.04\textwidth,trim=32bp 18bp 14bp 12bp,clip]{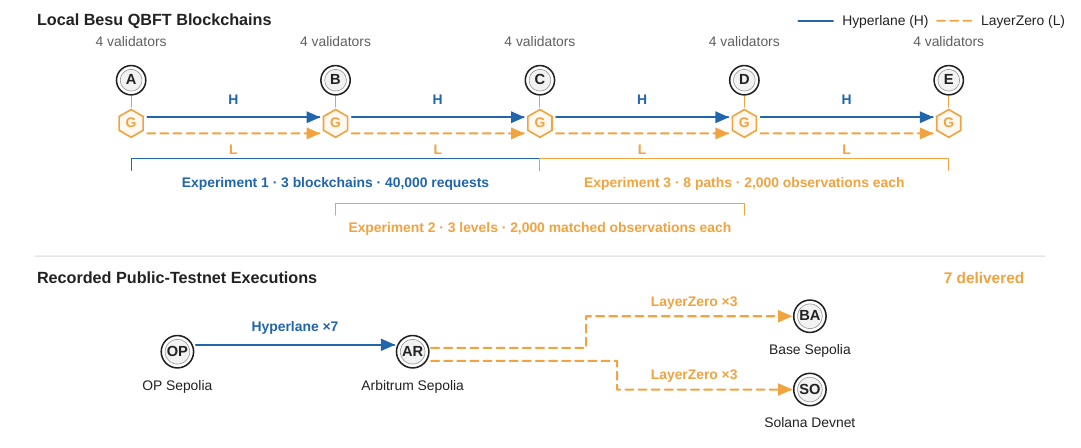}}
\caption{\textbf{Experimental environments.} Experiment 1 uses three local blockchains for same-protocol and cross-protocol paths. Experiments 2 and 3 use five local blockchains. The seven successful public-testnet executions use OP Sepolia, Arbitrum Sepolia, Base Sepolia, and Solana Devnet.}
\label{fig:experiment-environment}
\end{figure*}

The toolchain uses Solidity 0.8.28, Foundry 1.5.1, Python 3.13.11, Node.js
24.11.1, Hyperlane Core 11.3.1, Hyperlane SDK 36.0.0, and LayerZero V2 3.0.168.
The workloads fix the path, payload schedule, and concurrency limit.  Each
successful request must have a unique nonce, records for every protocol
message, and exactly one destination effect.
The Experiment 2/3 campaign contains 22,000 completed executions: 2,000 for
each of 11 protocol paths.  Experiment 2 uses three matched path groups, and
Experiment 3 uses eight principal paths.
The measured graph has directed diameter four among finite reachable pairs
(Section~\ref{sec:hop-count}); Experiment 3 covers all four path lengths, while
Experiment 2 varies the number of earlier deliveries before a fixed final
switch.

Table~\ref{tab:experimental-settings} summarizes the controlled workloads.
H denotes Hyperlane and L denotes LayerZero. In Experiment 1, HH and LL use
native same-protocol forwarding; HL and LH use XIR for the protocol handoff.
Protocol verification roles are separate from the four QBFT consensus
validators on each blockchain.

\begin{table*}[t]
\caption{\textbf{Experimental settings and workloads.} Settings are grouped by the experiment to which they apply. The Experiment 1 delivery settings follow its frozen profile; the adversarial workload is evaluated separately.}
\label{tab:experimental-settings}
\centering
\small
\setlength{\tabcolsep}{5pt}
\renewcommand{\arraystretch}{1.12}
\begin{tabular}{@{}p{0.17\textwidth}p{0.23\textwidth}p{0.55\textwidth}@{}}
\toprule
Scope & Setting & Value \\
\midrule
Local experiments & Consensus and protocols & Besu QBFT; four consensus validators per blockchain; Hyperlane and LayerZero \\
\midrule
Experiment 1\newline Delivery & Blockchains and paths & Three blockchains; native HH and LL; XIR cross-protocol HL and LH \\
 & Requests and payload & 10,000 requests per path; application payloads of 32, 64, 96, and 128 bytes \\
 & Warm-up and retries & Zero warm-up requests; retries excluded from the designated request denominator \\
 & Hyperlane verification & MessageIdMultisigIsm; one validator per origin, threshold one; one relayer \\
 & LayerZero verification & One required DVN, zero optional DVNs, one Executor; one source confirmation; self-hosted research worker \\
\midrule
Experiment 2 & Blockchains and paths & Five blockchains; HL, HHL, HHHL; one to three earlier deliveries before the final switch \\
 & Matched workload & 2,000 requests per path, matched by route sequence \\
 & Latency equivalence & Margin of $\pm0.5$ seconds per earlier delivery; 90\% confidence intervals \\
\midrule
Experiment 3 & Blockchains and paths & Five blockchains; H, HH, HHH, HHHH, HL, HLH, HLHL, LHLH \\
 & Workload and factors & 2,000 requests per path; one to four protocol connections, zero to three protocol switches \\
 & Cost-model intervals & 95\% confidence intervals for regression coefficients \\
\midrule
Adversarial workload & Cases and repetitions & 13 classes; HL and LH; 30 repetitions per class and direction \\
\bottomrule
\end{tabular}
\end{table*}

Figure~\ref{fig:evaluation-results} summarizes the cost and latency measurements in
Experiments 2 and 3.  Experiment 1 is reported directly in
Section~\ref{sec:experiment-execution}.

\begin{figure*}[t]
\centering
\makebox[\textwidth][c]{\includegraphics[width=1.08\textwidth]{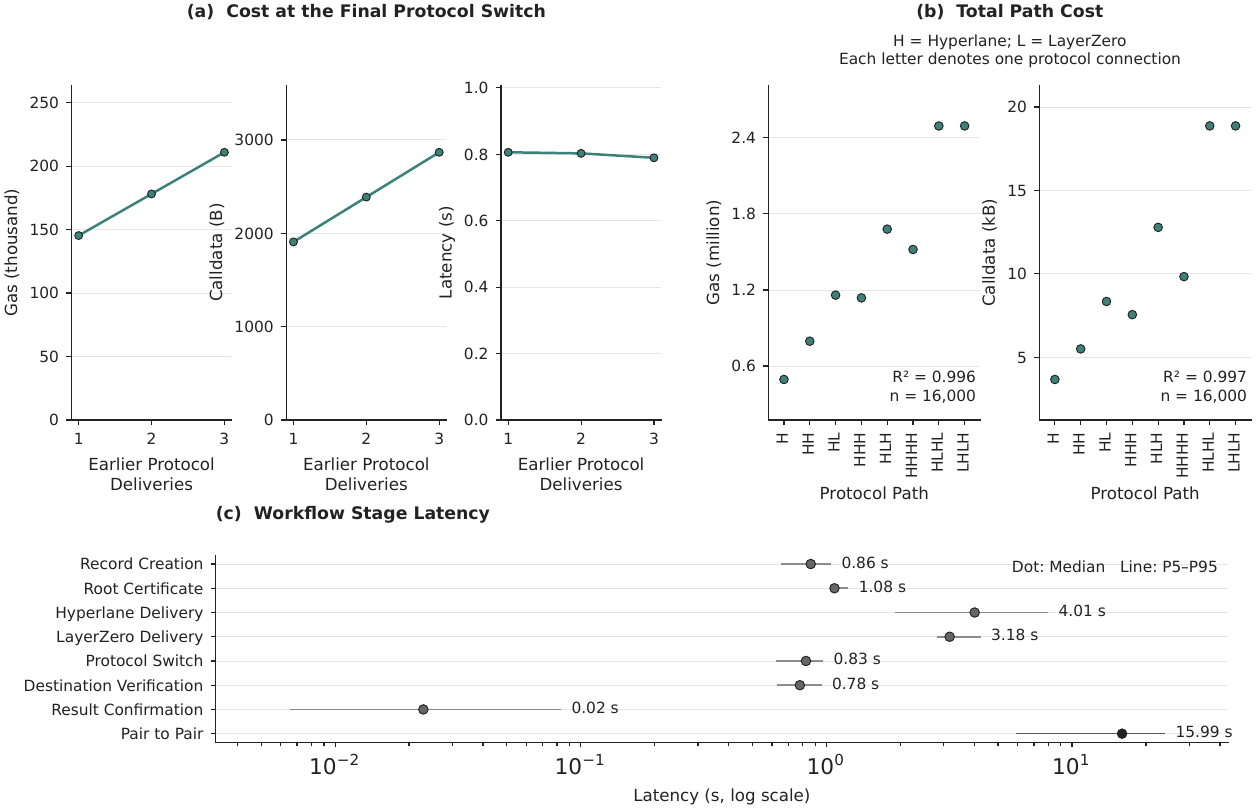}}
\caption{\textbf{Execution cost and workflow latency.} (a) Experiment 2 uses 6,000 matched observations, with 2,000 at each number of earlier deliveries. (b) Experiment 3 reports mean costs for eight paths across 16,000 executions; each letter denotes one protocol connection (H: Hyperlane; L: LayerZero). The displayed $R^2$ values describe the additive models. Teal denotes aggregate measurements in (a) and (b). (c) Stage latencies use all recorded instances from the 22,000-execution campaign; blue denotes Hyperlane delivery and orange denotes LayerZero delivery. Points show medians; lines show the 5th--95th percentiles.}
\label{fig:evaluation-results}
\end{figure*}

\subsection{Experiment 1: Pair-to-Pair Feasibility}
\label{sec:experiment-execution}

Experiment 1 varies the protocols used by a two-hop path and measures delivery,
single destination effects, receipt order, and rejection checks.  The
three-blockchain environment executes two same-protocol paths, HH
(Hyperlane--Hyperlane) and LL (LayerZero--LayerZero), and two cross-protocol
paths, HL and LH.  Each path carries 10,000 requests under one schedule.  A
conformance workload
runs 13 adversarial classes in both cross-protocol directions, with 30
repetitions per class and direction.

All 40,000 local requests complete with exactly one destination effect.  Every
cross-protocol request records the verification result of both protocol
deliveries and the protocol switch.  Across 20,000 records, the source XIR Gateway step
averages 49,370.9 gas and 596 calldata bytes; across 20,000 records, the
protocol-switch step averages 136,508.7 gas and 1,812 bytes.  The 780 adversarial case-runs exhibit the expected rejection behavior for
payload, context, profile, trace, evidence, bundle, approved-ingress, and
replay checks. Rejected submissions leave consumption and application state
unchanged; replay cases retain the initial application effect and reject
duplicate execution. The public-testnet ledger contains eleven attempts:
seven delivered and four encountered protocol failures. The seven deliveries
comprise one Hyperlane EVM path, three Hyperlane-to-LayerZero EVM paths,
and three Hyperlane-to-LayerZero EVM-to-Solana paths.
Table~\ref{tab:public-testnet-transactions} gives the destination transaction or
signature for every successful execution.
Together, the local delivery workload, adversarial cases, and selected
public-testnet paths establish execution feasibility and conformance to the
implemented verification checks.

\begin{table*}[t]
\caption{\textbf{Successful public-testnet executions.} Each link resolves to the complete destination transaction or Solana signature.}
\label{tab:public-testnet-transactions}
\centering
\footnotesize
\setlength{\tabcolsep}{4pt}
\begin{tabular}{@{}lllll@{}}
\toprule
Blockchain path & Protocol path & Terminal network & Outcome & Transaction or signature \\
\midrule
OP Sepolia $\rightarrow$ Arbitrum Sepolia
  & Hyperlane & Arbitrum Sepolia
  & Delivered
  & \texttt{\href{https://sepolia.arbiscan.io/tx/0x8ba2a59cea9868fe7171f6b59d9a380e18563bfc1c07f05cbfa27860fe680497}{0x8ba2\ldots{}0497}} \\
OP Sepolia $\rightarrow$ Arbitrum Sepolia $\rightarrow$ Base Sepolia
  & Hyperlane $\rightarrow$ LayerZero & Base Sepolia
  & Delivered
  & \texttt{\href{https://sepolia.basescan.org/tx/0x23e7fe354df890d760879ae5c525ac831a5fdc9be76f8670e335fc7633a0b1b5}{0x23e7\ldots{}b1b5}} \\
OP Sepolia $\rightarrow$ Arbitrum Sepolia $\rightarrow$ Base Sepolia
  & Hyperlane $\rightarrow$ LayerZero & Base Sepolia
  & Delivered
  & \texttt{\href{https://sepolia.basescan.org/tx/0x1e86f4172e7fe62a18c4fdcbcc9d4fac1b45adb91185f60300c4f1cc11042074}{0x1e86\ldots{}2074}} \\
OP Sepolia $\rightarrow$ Arbitrum Sepolia $\rightarrow$ Base Sepolia
  & Hyperlane $\rightarrow$ LayerZero & Base Sepolia
  & Delivered
  & \texttt{\href{https://sepolia.basescan.org/tx/0xd54cf92e414784251eead8f2775dd828ab11019008dfb2b3fe2399d64214736e}{0xd54c\ldots{}736e}} \\
OP Sepolia $\rightarrow$ Arbitrum Sepolia $\rightarrow$ Solana Devnet
  & Hyperlane $\rightarrow$ LayerZero & Solana Devnet
  & Delivered
  & \texttt{\href{https://solscan.io/tx/2Ed9h2KdJeq81XATnhhBSGie75tEdTb43v4frjChv65nw19vWmtE4SaT2J4vPSvNAHFTA4H7941h3rP5ZC6j5KBc?cluster=devnet}{2Ed9\ldots{}j5KBc}} \\
OP Sepolia $\rightarrow$ Arbitrum Sepolia $\rightarrow$ Solana Devnet
  & Hyperlane $\rightarrow$ LayerZero & Solana Devnet
  & Delivered
  & \texttt{\href{https://solscan.io/tx/5bUVz9PvqUxLjVvLeNHaUzBNU4cwLGoxjh8QNocNyQwzXTY1vkxtNiESodnoQVv6UaSmmkU8EMrGgnegnxNUBeUQ?cluster=devnet}{5bUV\ldots{}UBeUQ}} \\
OP Sepolia $\rightarrow$ Arbitrum Sepolia $\rightarrow$ Solana Devnet
  & Hyperlane $\rightarrow$ LayerZero & Solana Devnet
  & Delivered
  & \texttt{\href{https://solscan.io/tx/58JFKrmYw9umAzAEpEVkqAuQmVRxRVoxKwCcaXFj82bwLBonYqBGUvdhhbKuLWGm9i4apWuyQzZAeWwGFTmujvu2?cluster=devnet}{58JF\ldots{}mujvu2}} \\
\bottomrule
\end{tabular}
\end{table*}

\subsection{Experiment 2: Local Cost at the Final Protocol Switch}
\label{sec:experiment-transition}

Experiment 2 measures the local cost of earlier deliveries at one fixed final
protocol switch.  The independent variable is the number of earlier
same-protocol deliveries, $k\in\{1,2,3\}$.  The corresponding paths are HL,
HHL, and HHHL.  Each delivery adds one verification
record to the XIR message.  The dependent variables are gas, calldata, and handoff latency.  The
incoming XIR Adapter lookup, evidence query, bundle update, and outgoing protocol
send remain fixed.  The matched sample contains
6,000 handoff observations, with 2,000 observations for each number of earlier
deliveries.

One more earlier delivery adds exactly 480 calldata bytes and approximately
32,791 gas (90\% CI: 32,791--32,792) to the final handoff.  The
handoff-latency slope is $-0.0083$ seconds per earlier delivery, with a 90\% interval of
$[-0.0122,-0.0043]$ seconds; it lies within the prespecified $\pm0.5$-second
equivalence margin.  Each earlier delivery therefore adds a fixed amount of history and
verification work at the final handoff. Across the three measured history
lengths, gas and calldata grow linearly, while the latency slope remains
within the prespecified equivalence margin.

\subsection{Experiment 3: Pair-to-Pair Cost of Protocol Switches}
\label{sec:experiment-cost-growth}

Experiment 3 measures how each protocol switch contributes to total path cost.  The
five-blockchain deployment evaluates H, HH, HHH, HHHH, HL, HLH, HLHL, and
LHLH.  These eight paths span $h\in\{1,2,3,4\}$ hops and
$s\in\{0,1,2,3\}$ protocol switches.  Each path has 2,000 executions, for
16,000 observations in total.  For each path-level dependent variable
$Y$, the additive model is
\begin{equation}
Y=\beta_0+\beta_h h+\beta_s s+\beta_L\mathbb{1}[\text{starts with L}]+\epsilon .
\label{eq:experiment-three-model}
\end{equation}
The predictors are $h$, $s$, and the starting protocol.

One additional protocol connection contributes 344,658 gas (95\% CI:
344,548--344,767) and 2,130 calldata bytes.  For paths of the same length and
starting protocol, one additional protocol switch contributes 312,075 gas
(95\% CI: 311,973--312,177) and 2,887 calldata bytes.  The gas and calldata
models explain the observed variation with $R^2=0.996$ and $R^2=0.997$,
respectively.
This coefficient includes coordination and verification across the complete
path.  Experiment 2 instead measures one more earlier delivery at a fixed
handoff.  Across the measured one-to-four-hop paths, total gas and calldata are
approximately additive in path length and protocol-switch count. The complete-history
encoding still incurs the quadratic aggregate transmission term derived in
Eq.~\eqref{eq:route-growth}; the fitted coefficients describe the evaluated
path lengths and protocol combinations.

Figure~\ref{fig:evaluation-results}(c) separates cross-chain protocol delivery
from XIR processing. A Hyperlane delivery has a median latency of 4.015
seconds, and a LayerZero delivery has a median latency of 3.175 seconds. The
XIR protocol-switch and destination-verification stages have median latencies
of 0.826 and 0.780 seconds. The complete path has a median latency of 15.987
seconds.

%% file: sections/7-discussion.tex
\section{Discussion}
\label{sec:discussion}

\subsection{Applying XIR to Cross-Chain Messages}

XIR is useful when an application can reach its destination through existing
protocol connections and can accommodate intermediate execution. Each
additional connection contributes protocol delivery work; each handoff
preserves the operation and checks its accumulated evidence. The application
therefore gains access to a larger reachable set while paying for the selected
path. The graph analysis describes this opportunity, and the prototype
measurements quantify its execution cost for Hyperlane~\cite{hyperlanemailbox}
and LayerZero~\cite{layerzerooapp}.

A deployment requires an XIR Gateway on each participating blockchain and
XIR Adapters for the protocols used along its paths. Adding a protocol involves
mapping its message identifiers, authenticated callbacks, and verifier
configuration into the common representation. The current prototype supplies
these mappings for Hyperlane and LayerZero, including selected EVM-to-Solana
paths. The six-protocol graph provides the broader deployment model.

\subsection{Verification and Delivery Conditions}

Each connection retains its native verification mechanism. XIR links the
accepted results through approved XIR Adapters and checks the complete history at
the destination. The delivery theorem consequently depends on sound protocol
callbacks, correct XIR components and source certification, authentic
configuration, and cryptographic binding. In the implemented policy, every
receipt must resolve to an active profile with a sufficient security level;
the policy hash binds the accompanying policy metadata.

Execution uses the configuration available at each step. A profile change can
therefore stop a path that was eligible when selected. Operational route
selection must account for profile validity, intermediate availability, and
protocol delivery conditions. At the destination, atomic consumption of the
message identifier ensures at-most-once committed execution. Successful
completion additionally depends on the participating blockchains, protocol
services, and forwarding components making progress.

\subsection{Path Length and Deployment}

The measured graph concentrates reachable pairs at short distances: two-hop
shortest paths account for 75.63\% of reachable pairs, and all finite shortest
paths use at most four hops. This distribution motivates the evaluated path
lengths. Longer paths carry more history, increasing destination verification
linearly and aggregate history transmission quadratically in the current
encoding. Compressing the authenticated history is a natural extension for
applications that require longer paths.

The controlled experiments characterize the deployed protocol stacks on local
Besu QBFT~\cite{besuqbft} blockchains. Public-testnet records demonstrate selected paths
across EVM networks and Solana Devnet. Further deployments can evaluate how
network delay, protocol service availability, and additional XIR Adapter
implementations affect the same verification and execution workflow.

%% file: sections/8-conclusion.tex
\section{Conclusion}
\label{sec:conclusion}

This paper presents XIR, a framework that composes existing cross-chain
protocol connections. XIR uses a verifiable intermediate representation that
binds one application cross-chain message to an ordered record of
authenticated protocol deliveries. XIR comprises XIR Gateways, XIR Adapters,
an XIR Registry, and an XIR Router. On the source blockchain, an XIR Gateway
authenticates the application and creates the record. On each intermediate
blockchain, an XIR Gateway verifies the incoming delivery and extends the
record. On the destination blockchain, an XIR Gateway verifies the complete
history before execution.
Theoretical analysis and evaluation show that XIR preserves source authorization and evidence
order, enforces per-hop policies, and prevents duplicate delivery. With XIR
Gateways and XIR Adapters, composition over the measured 286-blockchain
graph reaches 78,953 ordered pairs (96.86\% of all possible pairs), compared with 14.64\% direct reachability. The open-source
Hyperlane--LayerZero prototype demonstrates executable delivery and measures
the cost of carrying verification history across multi-hop paths. Future
work will extend protocol coverage and reduce history cost.

%% file: sections/submission-declarations.tex
\section*{Funding}

This work was supported in part by the National Key R\&D Program of China
under Grant 2023YFB2704803, in part by the National Natural Science Foundation
of China under Grants 62602642 and U22B2032, and in part by the Beijing Advanced
Innovation Center for Future Blockchain and Privacy Computing under Grant
GJJ-24-025.

\section*{Data availability}

The cross-chain event dataset used for the connectivity analysis is publicly
available on Kaggle (version 4):
\url{https://doi.org/10.34740/kaggle/dsv/19934883}.

\section*{Code availability}

The XIR implementation and experiment scripts are available at
\url{https://github.com/lysrain21/XIR}.